\crefname{algocfline}{line}{lines}
\DeclareMathOperator*{\argmax}{arg\,max}
\declaretheorem[numberwithin=section,refname={Theorem,Theorems},Refname={Theorem,Theorems}]{theorem}
\declaretheorem[numberlike=theorem]{lemma}
\declaretheorem[numberlike=theorem]{definition}
\declaretheorem[numberlike=theorem, refname={Observation,Observations},Refname={Observation,Observations},name={Observation}]{observation}
\theoremstyle{definition}
\newcommand{\pot}{$\phi(\X)$}
\newcommand{\poto}{$\phi_1(\X)$}
\newcommand{\pott}{$\phi_2(\X)$}
\newcommand{\xs}{$(\X,\sigma)$}
\newcommand{\aij}{A_{i,j}(\X,\sigma)}
\newcommand{\uj}{U_j(\X)}
\newcommand{\ui}{U_i(\X)}
\newcommand{\bijo}{B_{i, j}^{(1)}(\X)}
\newcommand{\bijt}{B_{i, j}^{(2)}(\X)}
\newcommand{\bjio}{B_{j, i}^{(1)}(\X)}
\newcommand{\bjit}{B_{j, i}^{(2)}(\X)}
\newcommand{\biko}{B_{i, k}^{(1)}(\X)}
\newcommand{\bkio}{B_{k, i}^{(1)}(\X)}
\newcommand{\bio}{B_{i}^{(1)}(\X)}
\newcommand{\bit}{B_{i}^{(2)}(\X)}
\newcommand{\biu}{B_{i}^{(u)}(\X)}
\newcommand{\bko}{B_{k}^{(1)} (\X)}
\newcommand{\bo}[1]{B_{#1}^{(1)}(\X)}
\newcommand{\bt}[1]{B_{#1}^{(2)}(\X)}
\newcommand{\boxy}[2]{B_{#1,#2}^{(1)}(\X)}
\newcommand{\btxy}[2]{B_{#1,#2}^{(2)}(\X)}
\newcommand{\efx}{\textsf{EFX}}
\newcommand{\ef}{\textsf{EF}}
\newcommand{\mms}{\textsf{MMS}}
\newcommand{\pmms}{\textsf{PMMS}}
\newcommand{\eefx}{\textsf{EEFX}}
\newcommand{\mxs}{\textsf{MXS}}
\newcommand{\efl}{\textsf{EFL}}
\newcommand{\efr}{\textsf{EFR}}
\newcommand{\rmms}{\textsf{RMMS}}
\newcommand{\effx}{\textsf{EF2X}}
\newcommand{\po}{\textsf{PO}}
\newcommand{\X}{\mathbf{X}}
\newcommand{\Y}{\mathbf{Y}}
\newcommand{\C}{\mathbf{C}}
\title{$\efx$ Allocations Exist on Triangle-Free Multi-Graphs}
\author[1]{Mahyar Afshinmehr \thanks{mahyarafshinmehr@gmail.com}}
\author[2]{Arash Ashuri\thanks{arashashoori199821@gmail.com}}
\author[3]{Pouria Mahmoudkhan \thanks{pouriamksh@gmail.com}}
\author[4]{Kurt Mehlhorn \thanks{mehlhorn@mpi-inf.mpg.de}}
\affil[3]{Department of Computer Engineering, Sharif University of Technology}
\affil[4]{Max Planck Institute for Informatics, Saarland Informatics Campus (SIC)}
\date{}
\begin{document}

\maketitle

\begin{abstract}
    We study the fair allocation of indivisible goods among agents, with a focus on limiting envy. A central open question in this area is the existence of \emph{$\efx$ allocations}---allocations in which any envy of any agent $i$ towards any agent $j$ vanishes upon the removal of any single good from $j$'s bundle. Establishing the existence of such allocations has proven notoriously difficult in general, but progress has been made for restricted valuation classes. \citet{CFKS23} proved existence for \emph{graphical valuations}, where goods correspond to edges in a graph, agents to nodes, and each agent values only incident edges. The graph was required to be simple, i.e., for any pair of agents, there could be at most one good that both agents value. The problem remained open, however, for \emph{multi-graph valuations}, where for a pair of agents several goods may have value to both. In this setting, \citet{SS25} established existence whenever the shortest cycle with non-parallel edges has length at least six, while \citet{ADKMR25} proved existence when the graph contains no odd cycles.  
    
    In this paper, we strengthen these results by proving that $\efx$ allocations always exist in multi-graphs that contain no cycle of length three. Assuming monotone valuations, we further provide a pseudo-polynomial time algorithm for computing such an allocation, which runs in polynomial time when agents have cancelable valuations, a strict superclass of additive valuation functions. Accordingly, our results stand as one of the only cases where $\efx$ allocations exist for an arbitrary number of agents.
\end{abstract}

\section{Introduction}

    Research on \emph{Fair Division} seeks to formally understand how a collection of resources can be distributed among multiple agents in a manner that is deemed \emph{fair} by all participants. The problem arises across a wide range of domains, including the division of business assets, allocation of computational resources, course assignments, and dispute resolution settings such as divorce settlements and air traffic management (\cite{etkin2007spectrum,moulin2004fair,vossen2002fair,budish2012multi,pratt1990fair}). Originating with the foundational work of \citeauthor{S49}, fair division has developed at the intersection of economics, mathematics, computer science, and social choice theory. In recent years, it has grown into a mature and interdisciplinary area of study, motivated by both its theoretical depth and its practical significance. For detailed surveys and expository treatments, see \cite{AABRLMVW23,brams1996fair,brandt2016handbook,robertson1998cake}.

    Over the past decade, research on fair division has primarily focused on allocating indivisible goods, with the central goal of minimizing envy among agents. Since some degree of envy is inevitable in certain instances\footnote{For example, if a single good is more valuable than all others combined, any agent who receives it will inevitably be envied.}, much of the literature has concentrated on relaxations of exact envy-freeness and on understanding the extent to which these relaxations can be guaranteed or approximated.

    The first relaxation of envy-freeness is {\em envy freeness up to some good} ($\ef1$), introduced formally in \cite{B10} and shown achievable earlier by 
    \cite{LMMS04}.

    An allocation is $\ef1$ if for any two agents $i,j$ with assigned bundles $X, Y$, respectively, 
    there exists some good $g \in Y$ such that $i$ weakly prefers $X$ to $Y\setminus\{g\}$.

    A particularly compelling notion is \emph{envy-freeness up to any good ($\efx$)}, introduced by \citet{CKMPSW19}. An allocation is $\efx$ if removing any good from the bundle allocated to an agent would ensure that no agent envies the remaining bundle. The $\efx$ criterion strikes a delicate balance between fairness and feasibility, and has therefore emerged as a central benchmark in the study of indivisible goods allocation.
     
    Despite its appeal, the existence of $\efx$ allocations in general remains a major open problem. Over the last few years, significant progress has been made by identifying valuation classes where $\efx$ allocations are guaranteed to exist.
    
   Today, we know that $\efx$ exists in restricted settings, such as when  agents' valuations are identical \cite{PR20}, 
    monotone but only up to two types \cite{M24},
    additive but only up to three types \cite{PGNV25}, or binary \cite{BSY23}. 
    For three-agent instances, $\efx$ allocations were shown to exist for additive valuations \cite{CGM24}, subsequently extended to nice cancelable valuations \cite{BCFF21} and $\mms$-feasible valuations \cite{ACGMM23}. 
    
    \citet{CFKS23} showed that $\efx$ allocations exist when valuations can be represented via a simple graph, where goods correspond to edges in a graph, agents correspond to nodes, and each agent values only the edges incident to her. 
    A natural application of this setting arises in the allocation of geographical resources—for example, distributing natural resources among neighboring countries, assigning office spaces to research groups, or dividing public areas among communities within a region.
    
    The natural next step is to consider \emph{multi-graph valuations}, an open question raised by \cite{CFKS23}, where multiple goods may connect the same pair of agents. 
    This setting is substantially richer and more complex. Understanding whether $\efx$ allocations always exist in this model has emerged as a key open question.
    Independently, \citet{ADKMR25, SS25, BP24} established the existence of $\efx$ allocations when the underlying multi-graph contains no odd cycles. 
    Moreover, \cite{SS25} proved that $\efx$ allocations exist whenever the shortest cycle with non-parallel edges has length at least six, 
    while \cite{BP24} further showed that if the underlying graph is t-colorable and its shortest cycle has length at least $2t-1$, then an $\efx$ allocation is guaranteed. \cite{ADKMR25} has also proved the existence of $\efx$ allocations on multi-graphs when the underlying graph is a single cycle.

\subsection{Our contributions and Techniques}
    In this work, we answer the open question raised by \cite{CFKS23} and strengthen the previously known results in the multi-graph setting and take a significant step toward resolving the general existence question for multi-graph valuations. Specifically, we prove that $\efx$ allocations always exist when the underlying multi-graph contains no cycle of length three. This substantially improves upon the previous six-cycle bound of \cite{SS25}, and pushes further the no odd-cycle result of \cite{ADKMR25}.    
    We present a pseudo-polynomial time algorithm for computing an $\efx$ allocation under monotone valuations. Furthermore, when agents have cancelable valuations, a strict superclass of additive valuations, our algorithm runs in polynomial time (see Theorem \ref{thm:main_result}). 

    In general, our results go strictly beyond the work of \citet{ADKMR25, SS25, BP24} and expand the frontier of known valuation classes for which $\efx$ allocations are guaranteed to exist, and provide efficient methods for computing them.

\subsubsection*{Technical Overview}

    We give a brief and simplified description of our techniques to prove the existence of $\efx$ allocations on triangle-free multi-graphs. Our Algorithm consists of three phases that move in the space of partial $\efx$ allocations.

\paragraph{Phase One.}
    For each pair of agents $i$ and $j$, we partition the set of edges between them into two bundles, which we refer to as unit bundles. More precisely, we designate an agent to divide the set of items shared by $i$ and $j$ into two $\efx$-feasible\footnote{Given a partition $\Y = \langle Y_1, \cdots, Y_l \rangle$ of a set $S$, we say a bundle $Y_k$ is $\efx$-feasible for an agent like $i$, if $i$ weakly prefers $Y_k$ to every other bundle in the partition after the removal of any item from it.} bundles from her perspective. These unit bundles remain fixed throughout the algorithm: if an agent receives one item from a unit bundle, she receives the entire bundle.

    In Phase One, each agent $i$ is assigned exactly one of her incident unit bundles, subject to the following conditions:
\begin{enumerate}
    \item The resulting partial allocation is $\efx$.
    \item Each agent weakly prefers her own bundle to every unallocated incident unit bundle.
    \item The longest path in the envy graph\footnote{The envy graph associated with a partial allocation has one node per agent, and an edge from agent $i$ to agent $j$ whenever $i$ envies $j$’s bundle relative to her own.} has length at most one.
\end{enumerate}

    To ensure these properties, agents select their most preferred unallocated incident unit bundle in a carefully chosen order. Both the selection order and the construction of unit bundles are essential for maintaining the above invariants. For instance, if agent $i$ comes before agent $j$ in this order, then agent $j$ cuts the set $E(i, j)$\footnote{$E(i, j)$ denotes the set of items (edges) between agents $i$ and $j$.} into two $\efx$-feasible bundles for herself, that are the unit bundles between agents $i$ and $j$.

    If, at the end of Phase One, no agent envies another, then the remaining unit bundles can be allocated as follows: for each pair of agents $i$ and $j$, one of the unit bundles between them is assigned to $i$, and the other to $j$, while ensuring that each agent retains the unit bundle assigned to her in Phase One. This yields a complete $\efx$ allocation. The justification is that, for every pair $(i,j)$, the value of $j$’s bundle from $i$’s perspective equals the value of the unit bundle incident to both $i$ and $j$ that is allocated to $j$. By construction, this value does not exceed the value of $i$’s own bundle—since that unit bundle is either (a) unallocated after Phase One, in which case $i$ does not envy it, or (b) allocated to $j$, in which case $i$ does not envy it by assumption.

    If, however, some agents are still envied after Phase One, then every non-envied agent can nevertheless be assigned one of her incident unit bundles while preserving the $\efx$ property, by an analogous argument. The remaining challenge lies in allocating the unassigned unit bundles incident to envied agents.

\paragraph{Phases Two and Three.}
    The second and third phases of the algorithm are designed to allocate these remaining unit bundles that are incident to envied agents. After phase one, the envy graph would look like a set of stars (see \cref{fig 1}) since the length of its longest path is at most one.

\begin{center}
\begin{figure}[H]
    \centering

\begin{tikzpicture}[scale=0.7, every node/.style={circle,fill=black,inner sep=1pt}] 
\node (A) at (0,0) {};
\foreach \i in {0,120,240}{
  \node (A\i) at (\i:1.2) {};
  \draw[->] (A) -- (A\i);
}

\node (B) at (4,0) {};
\foreach \i in {45,135,225,315}{
  \node (B\i) at ($(B)+(\i:1.2)$) {};
  \draw[->] (B) -- (B\i);
}

\node (C) at (8,0) {};
\foreach \i in {90,162,234,306,18}{
  \node (C\i) at ($(C)+(\i:1.2)$) {};
  \draw[->] (C) -- (C\i);
}
\node (D) at (-4,0.6) {};
\node (E) at (-4,-0.6) {};
\end{tikzpicture}
    \caption{This is an example of how the envy graph may look after phase one.}
    \label{fig 1}
\end{figure}
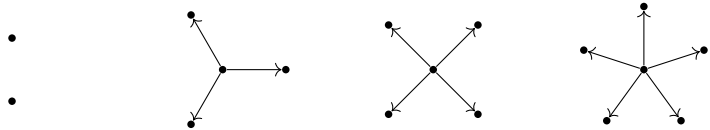
\end{center}

    Consider an arbitrary star centered at a non-envied agent $i_0$, who envies agents ${i_1, i_2}$, and also consider some additional agent $j$ (see \cref{fig: 2}). Let $(C_1, C_2)$ denote the unit bundles between agents $i_1$ and $j$. Similarly, let $(D_1, D_2)$ denote the unit bundles between agents $i_2$ and $j$. We attempt to allocate the remaining incident unit bundles $C_1$ and $D_1$ to agent $i_0$, allowing the allocation to assign an item to a non-incident agent.

    Note that agent $j$ does not envy $i_0$’s original bundle $X_{i_0}$, since the longest path in the envy graph has length one. Therefore, if $j$ does not envy the combined bundle $C_1 \cup D_1$, then she also does not envy $C_1 \cup D_1 \cup X_{i_0}$. This follows from the triangle-free property of the multi-graph, which ensures that at least one of the sets $C_1 \cup D_1$ or $X_{i_0}\cap E(i_0,j)$ is empty. 

    If the attempt of allocating $C_1$ and $D_1$ to agent $i_0$ fails to maintain an $\efx$ partial allocation, we adjust the allocation greedily. For instance, if $C_1 \cup D_1$ is more valuable than agent $j$’s current bundle, we instead allocate $C_1 \cup D_1$ to $j$. Furthermore, after adding these unit bundles for every possible $j$, if agent $i_1$ comes to envy $i_0$, we swap the bundles of $i_0$ and $i_1$, thereby making both agents happier.

    The second and third phases of our algorithm consist of a sequence of such local adjustments—allocations and unallocations of unit bundles—designed to preserve the $\efx$ property while ensuring that all items are eventually allocated.

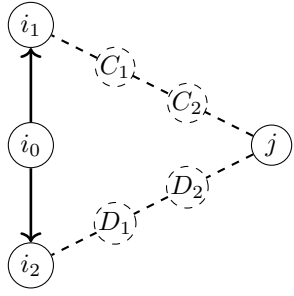
\begin{figure}[h]
    \centering
\begin{center}
\begin{tikzpicture}[scale=0.8, every node/.style={circle,fill=black,inner sep=1.5pt}]
\node[circle,draw,fill=white,minimum size=6mm,inner sep=0pt] (i0) at (0,0) {$i_0$};

\node[circle,draw,fill=white,minimum size=6mm,inner sep=0pt] (i1) at (90:2.0) {$i_1$};
\node[circle,draw,fill=white,minimum size=6mm,inner sep=0pt] (i2) at (270:2.0) {$i_2$};

\draw[->, line width=1pt] (i0) -- (i1);
\draw[->, line width=1pt] (i0) -- (i2);

\node[circle,draw=black,fill=white,inner sep=2pt] (j) at (4,0) {$j$};

\draw[dashed, line width=0.8pt] (j) -- (i1);
\node[circle,draw,dashed,fill=white,minimum size=2.5mm,inner sep=0.2pt] at ($(j)!0.35!(i1)$) {$C_2$};
\node[circle,draw,dashed,fill=white,minimum size=2.5mm,inner sep=0.2pt] at ($(j)!0.65!(i1)$) {$C_1$};

\draw[dashed, line width=0.8pt] (j) -- (i2);
\node[circle,draw,dashed,fill=white,minimum size=2.5mm,inner sep=0.2pt] at ($(j)!0.35!(i2)$) {$D_2$};
\node[circle,draw,dashed,fill=white,minimum size=2.5mm,inner sep=0.2pt] at ($(j)!0.65!(i2)$) {$D_1$};

\end{tikzpicture}
\end{center}
        \caption{This figure shows a star in the envy graph, and an arbitrary agent $j$, and it depicts unit bundles among envied agents of this star and agent $j$.}
    \label{fig: 2}
\end{figure}

\subsection{Further Related Works}

    In another line of research,
    \citet{ARS24} showed that every additive multi-graph instance admits a $\frac{2}{3}$-$\efx$ allocation, and \citet{KKSS25} 
    recently improved this bound to $\frac{1}{\sqrt{2}}$-$\efx$. 
    Furthermore, \citet{KKSS25} showed that $\efx$ allocations exists on multi-graphs if the valuation functions are restricted additive.
    \citet{ZWLL24} studied the mixed manna setting with both goods and chores, proving that deciding the existence of $\efx$ orientations on simple graphs with additive valuations is NP-complete. 
    \citet{ZM25} established a connection between the existence of $\efx$ orientations and the chromatic number of the graph.
    \cite{BGRS25} refined this complexity landscape by showing that bipartiteness is a tight boundary for tractability, identifying hardness even for graphs very close to bipartite.
    More recently, \citet{DEGK25} proved that $\ef1$ orientations always exist for monotone valuations and can be computed in pseudo-polynomial time.

    For more general settings, a lot of work has focused on relaxations of $\efx$.
    One such relaxation has aimed to achieve multiplicative approximations of $\efx$. \citet{PR20} showed the existence of $1/2$-$\efx$ allocations for subadditive valuations, and \citet{AMN20} proved the existence of $1/\phi\approx 0.618$-$\efx$ allocations for additive valuations. 
    $2/3$-$\efx$ allocations were shown to exist for more restrictive settings like
    when there are at most four types of valuations \citep{HMN25}, at most seven agents \citep{ARS24}, and when the agents agree on what are the top $n$ items \citep{MS23}. \citet{BKP24} achieved improved $\efx$ approximations for restricted settings. 

    A related relaxation, $\effx$, requires that envy disappear after the removal of any two goods from an envied bundle.
    $\effx$ allocations were shown to exist for four agents with cancelable valuations \citep{AGS25},
    any number of agents with restricted additive valuations \citep{ARS22}, and with $(\infty,1)$-bounded valuations \citep{KSS24}.
    Several other relaxations of $\efx$ have also been studied, such as $\efl$ \citep{BBMN18}, $\efr$ \citep{FHLSY21}, 
    $\eefx$ and $\mxs$ \citep{CGRSV22} (subsequently extended to monotone valuations by \cite{HANR25}).
    In addition, several works have explored the simultaneous satisfaction of multiple fairness notions. 
    For instance, \citet{AR25} established the coexistence of $2/3$-$\mms$ and $\ef1$, \citet{AG25} combined $\mxs$ with $\efl$, and \citet{F25} achieved $\rmms$ alongside $\efl$.

    Another line of research has focused on ``partial allocations,'' donating some of the goods to   charity and achieving $\efx$ with the rest. This was first studied by \citet{CGH19} who showed the existence of a partial allocation that satisfies $\efx$ and its Nash social welfare is half of the maximum possible. \citet{CKMS20} proved that $\efx$ allocations exist for $n$ agents if up to $n-1$ goods can be donated; moreover, no agent envies the donated bundle. \citet{BCFF21} improved this number to $n-2$ goods with nice cancelable valuations, and \citet{M24} extended this to monotone valuations.
    \citet{BCFF21} further showed that an $\efx$ allocation exists for $4$ agents with at most one donated good.   
    The number of donated goods was subsequently improved at the expense of achieving $(1-\varepsilon)$-$\efx$, instead of exact $\efx$ \citep{CGMMM21, ACGMM23, BBK22, CSJS23}.  

    Another pairwise fair notion is $\pmms$, introduced in \citep{CKMPSW19}, which is strictly stronger than $\efx$, but recently, \citet{BMP25} showed that
    $\pmms$ does not always exist even for three agents, where $\efx$ allocations have been shown to exist.

    Another line of research in fair allocation focuses on indivisible chores, where each agent is associated with a cost function. For additive cost functions, \cite{ZW24} obtained an $O(n^2)$-approximation for $\efx$, which was later improved to $4$-$\efx$ for any number of agents by \cite{GMQ25}. Independently, \cite{CS24} and \cite{afshinmehr2024approximateefxexacttefx} showed the existence of $2$-$\efx$ allocations for three agents, and recently \cite{garg2025existence2efxallocationschores} extended this to $2$-$\efx$ allocations for any number of agents. Exact $\efx$ allocations are known only in very restricted settings, and surprisingly, \cite{CS24} demonstrated that for monotone cost functions, there exist instances with no $\efx$ allocation. Another major open question—whether allocations can be both $\ef1$ and $\po$—was recently resolved affirmatively by \cite{mahara2025existencefairefficientallocation}.

    \section{Preliminaries}\label{prelims}
    An instance of discrete fair division is a tuple $\langle N,M,\{v_i\}_{i \in [n]}\rangle$, where
    $N=[n]=\{1,2,\ldots,n\}$ is a set of agents, $M$ is a set of $m$ indivisible goods, and $\{v_i\}_{i \in [n]}$ is a 
    profile of valuation functions, where $v_i: 2^M \to \mathbb{R}_{\geq 0}$ for each agent $i\in N$ determines $i$'s value for each subset of goods.     For notational simplicity, for a subset of goods $S$ and good $g \in S$, we use $S\setminus g$ to denote $S \setminus \{g\}$ and $g$ to denote \{g\}. 
    
    We consider valuation functions that are monotone, i.e., for any $S\subseteq T\subseteq M$, $v(S) \leq v(T)$. The valuation function $v(.)$ is called additive if, for any subset of items $S$, we have $v(S) = \Sigma_{g \in S} v(g)$. The valuation function $v(.)$ is called cancelable if, for any $S, T \subseteq M$ and any $g \in M \setminus (S \cup T)$, if $v(S \cup g) > v(T \cup g)$, then $v(S) > v(T)$, i.e., removing the same good from two different bundles would not change the relative preference between the two. Note that the class of cancelable valuation functions is a strict superclass of additive valuations.

    \paragraph{Multi-graph Instances.} 
        A fair division instance $\mathcal{I} = \langle [n], [m], \{v_i\}_{i \in [n]}\rangle$ on a \emph{multi-graph}\footnote{A multi-graph may contain multiple edges between two vertices.} is represented by a multi-graph $G = (V, E)$, where the $n$ agents correspond to the vertices in $V$, and the $m$ goods correspond to the edges in $E$. The structure is such that for every agent $i \in [n]$ and every subset of goods $S \subseteq [m]$, 
        \[
            v_i(S) = v_i(S \cap E(i)),
        \]
        where $E(i) \subseteq E$ is the set of goods incident to $i$, and $E(i,j)$ the set of edges between $i$ and $j$. For a multi-graph $G = (V, E)$, we define its skeleton as a graph $G' = (V, E')$  where $G'$ has the same set of vertices, and there is a single edge between two vertices if they share at least one edge in $G$, i.e., $i$ is connected to $j$ in $G'$ if $E(i, j) \neq \emptyset$ in $G$. We call a multi-graph \emph{triangle-free} if its skeleton does not contain any cycle of length three, i.e., the girth\footnote{The girth of a graph is the length of its shortest cycle.} of its skeleton is greater than or equal to four. In this work, we use the words \emph{good}, \emph{item}, and \emph{edge} interchangeably.

    \paragraph{Allocations and Orientations.} A partial allocation $\X = \langle X_1, X_2, \ldots, X_n \rangle$ is an ordered tuple of disjoint subsets of $[m]$, i.e, for every pair of distinct agents $i$ and $j$, we have $X_i, X_j \subseteq [m]$ and $X_i \cap X_j = \emptyset$. Here, $X_i$ denotes the bundle allocated to agent $i \in [n]$ in $X$. 
    A partial orientation is a partial allocation where $\forall i \in [n]$, we have $X_i\subseteq E(i)$.
    We say an allocation $\X$ is complete if $\bigcup\limits_{i \in [n]} X_i = [m]$.
 
    \paragraph{Envy, Strong Envy, $\efx$-Feasibility, $\efx$ Allocation, $\efx$ Cut, and $\efx$ Orientation.} 
    We say an agent $i$ envies a bundle $T$ with respect to bundle $S$ if $v_i(T) > v_i(S)$.
    Moreover, agent $i$ \emph{strongly envies} $T$ with respect to $S$
    if there exists a good $g \in T$ such that $v_i(S)<v_i(T \setminus g)$.
    Given an allocation $\X = \langle X_1, X_2, \ldots, X_n \rangle$, we say agent $i$ envies (respectively, strongly envies) agent $j$ 
    if $i$ envies (strongly envies) bundle $X_j$ with respect to bundle $X_i$.
    We say an allocation is $\efx$ if there is no strong envy between any pair of agents.
    Given a partition $\Y = \langle Y_1, \ldots, Y_k \rangle$ of a subset of goods into $k$ bundles, 
    we say bundle $Y_\ell$ is $\efx$-feasible for agent $i$ in $\Y$ if agent $i$ does not strongly envy any of the bundles in $\Y$ with respect to $Y_\ell$. We say the partition $\C = (C_1, C_2)$ of items in set $S$ is an $\efx$ cut of set $S$ for agent $i$ if both $C_1$ and $C_2$ are $\efx$-feasible for agent $i$ in $\C$. For a fair division instance on a multi-graph, the allocation $\X$ is an $\efx$ Orientation if it is an $\efx$ allocation and $X_i \subseteq E(i)$ for every $i \in [n]$. 

    Next, we demonstrate a property of $\efx$ orientations on multi-graphs via the following observation.

    \begin{observation}
    \label{one envy}
        For a multi-graph instance, consider a partial $\efx$ orientation $\X$ where an agent $i$ is envied by one of her neighbors $j$. Then, we must have $X_i \subseteq E(i,j)$. In particular, any vertex is envied by at most one agent in any $\efx$ orientation.
    \end{observation}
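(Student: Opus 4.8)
The plan is to argue directly from the definition of strong envy together with the structure of orientations on a multi-graph. Suppose $\X$ is a partial $\efx$ orientation and that neighbor $j$ envies $i$, i.e.\ $v_j(X_i) > v_j(X_j)$. Since $\X$ is an orientation, $X_i \subseteq E(i)$, and since $j$ is a multi-graph agent, $v_j(X_i) = v_j(X_i \cap E(j)) = v_j(X_i \cap E(i,j))$, because $E(i) \cap E(j) = E(i,j)$. So $j$'s value for $i$'s bundle comes entirely from the edges in $E(i,j)$.

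Now I would suppose for contradiction that $X_i \not\subseteq E(i,j)$, so there is some good $g \in X_i \setminus E(i,j)$. Then $v_j(X_i \setminus g) = v_j((X_i \setminus g) \cap E(i,j)) = v_j(X_i \cap E(i,j)) = v_j(X_i) > v_j(X_j)$, where the second equality holds because $g \notin E(i,j)$ and hence removing it does not change the intersection with $E(i,j)$. This exhibits a good $g \in X_i$ with $v_j(X_j) < v_j(X_i \setminus g)$, i.e.\ $j$ strongly envies $i$, contradicting that $\X$ is $\efx$. Hence $X_i \subseteq E(i,j)$, proving the first claim.

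For the "in particular" part: suppose $i$ is envied by two distinct neighbors $j$ and $k$ (note any agent envying $i$ in an orientation must be a neighbor, since otherwise $v_{j}(X_i)=v_j(X_i\cap E(j))$ could only be positive if $E(i,j)\neq\emptyset$; more carefully, if $j$ envies $i$ then $v_j(X_i)>v_j(X_j)\ge 0$, so $X_i\cap E(j)\neq\emptyset$, forcing $E(i,j)\neq\emptyset$). Applying the first part to $j$ gives $X_i \subseteq E(i,j)$, and applying it to $k$ gives $X_i \subseteq E(i,k)$. Thus $X_i \subseteq E(i,j) \cap E(i,k)$. Since the skeleton is a simple graph, $E(i,j)$ and $E(i,k)$ are edge-disjoint for $j \neq k$ (an edge incident to $i$ has a unique other endpoint), so $E(i,j) \cap E(i,k) = \emptyset$ and hence $X_i = \emptyset$. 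But then $v_j(X_i) = 0 \le v_j(X_j)$, contradicting that $j$ envies $i$. Therefore $i$ is envied by at most one agent.

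I do not expect a genuine obstacle here; the one point requiring a little care is handling the degenerate case $X_i = \emptyset$ (or, when valuations are merely monotone rather than additive, making sure the edge-disjointness argument is phrased in terms of set intersections of edge sets rather than additive value, which the argument above already does). Everything else is a direct unfolding of definitions.
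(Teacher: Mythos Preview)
Your proof is correct and follows the natural direct argument. The paper does not actually prove this statement---it is labeled an Observation and left without proof---so there is nothing to compare against beyond noting that your argument is exactly the kind of two-line unfolding of definitions the authors evidently had in mind. One tiny nitpick: when you conclude $X_i=\emptyset$ and write $v_j(X_i)=0$, strictly speaking monotone valuations into $\mathbb{R}_{\ge 0}$ need not satisfy $v_j(\emptyset)=0$; the clean way to finish is $v_j(X_i)=v_j(\emptyset)\le v_j(X_j)$ by monotonicity, contradicting envy. This does not affect the validity of your argument.
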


\section{EFX Allocations on Triangle-Free Multi-Graphs}

    In this section, we prove the main result of our work (\cref{thm:main_result}), the existence of an $\efx$ allocation for triangle-free multi-graphs involving an arbitrary number of goods and agents with monotone valuations.
    
\begin{theorem}\label{thm:main_result}
    For every muti-graph instance, $\efx$ allocations always exist when the skeleton of the underlying multi-graph contains no cycle of length three, and we can compute one in pseudo-polynomial time when the valuation functions are monotone and in polynomial time when the valuation functions are cancelable. 
\end{theorem}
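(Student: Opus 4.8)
The plan is to follow the three-phase structure sketched in the technical overview, maintaining throughout a partial $\efx$ allocation together with a small set of structural invariants. The first step is to fix, for every edge $E(i,j)$ of the skeleton, a partition into two \emph{unit bundles}: we choose an endpoint (say the one later in a fixed vertex ordering) and let her produce an $\efx$ cut of $E(i,j)$ for herself. Since each agent only values incident edges, from then on we may treat each unit bundle as an atomic object---any allocation that gives one agent a strict subset of a unit bundle can only be worse for the cut-maker, so it suffices to allocate unit bundles wholesale. I would first verify the easy structural lemma that if $\X$ is a partial $\efx$ allocation in which every allocated bundle is a unit bundle (or a union of unit bundles incident to a common vertex), then the envy graph restricted to unassigned unit bundles behaves well, mirroring \cref{one envy}.

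\textbf{Phase One.} Process agents in the fixed order; each agent, when processed, claims her most-valued incident unit bundle that is still unassigned. I would prove by induction on the order that after each step the partial allocation is $\efx$, every agent weakly prefers her bundle to each still-unassigned incident unit bundle, and the envy graph has longest path $\le 1$ (so it is a disjoint union of stars centered at non-envied agents). The $\efx$ invariant uses the fact that every bundle in play is a single unit bundle and unit bundles are $\efx$-feasible by construction; the ``longest path $\le 1$'' invariant is where the ordering does the work---an agent claiming a unit bundle can only be envied by the (unique, by triangle-freeness of the skeleton and \cref{one envy}) neighbor on the other side of that edge, and that neighbor was processed after her and hence is not currently envying anyone she in turn envies. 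If no agent is envied at the end, complete the allocation by splitting, for each skeleton edge, the two unit bundles between its endpoints (one to each), keeping each Phase-One bundle where it is; the argument in the overview shows this is $\efx$.

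\textbf{Phases Two and Three.} The remaining work is to dispose of the unit bundles incident to envied agents. For a star centered at a non-envied agent $i_0$ envying $i_1,\dots,i_t$, I would attempt to hand $i_0$ one unassigned unit bundle on each ``far'' edge $E(i_r, j)$ for every other vertex $j$, then perform greedy local repairs: if some $j$ now strongly envies the growing bundle of $i_0$, reassign the offending unit bundle(s) to $j$ instead; if some $i_r$ comes to envy $i_0$, swap the bundles of $i_0$ and $i_r$ (both weakly improve, strictly for one). The triangle-free property is the crucial leverage: for any $j$, at least one of $C_1\cup D_1$ (far unit bundles $i_0$ receives through $i_1,i_2,\ldots$) and $X_{i_0}\cap E(i_0,j)$ is empty, so adding $i_0$'s original bundle never creates new strong envy from $j$. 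I would set up a potential function---e.g. the vector of bundle values sorted suitably, or the number of unassigned unit bundles with ties broken by total allocated value---to show this sequence of adjustments terminates, and terminates in a state where all unit bundles are assigned and $\efx$ holds. For the running-time claims: with cancelable valuations the $\efx$ cuts and all comparisons are polynomial-time computable and the potential decreases polynomially often; with general monotone valuations the cuts and the potential are only pseudo-polynomially bounded, giving the pseudo-polynomial bound.

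\textbf{Main obstacle.} The delicate part is Phases Two and Three: ensuring that the greedy repairs (reassign-to-$j$, swap-with-$i_r$) do not cascade---in particular that a swap in one star does not destroy the $\efx$ property or the star structure in an adjacent star, and that moving a unit bundle to a non-incident agent $i_0$ never interacts badly with $i_0$'s other newly acquired bundles. I expect the bulk of the proof to consist of a careful case analysis showing that each local move strictly decreases an appropriately chosen potential while preserving all invariants, and that when no move applies, every item has been allocated. Designing the potential so that all three move types decrease it simultaneously is the crux.
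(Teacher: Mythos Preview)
Your Phase One contains a genuine gap. You claim that processing agents in a fixed order and letting each pick her favourite unassigned incident unit bundle yields an envy graph whose longest path is at most one, and you justify this by ``the neighbour was processed after her and hence is not currently envying anyone she in turn envies.'' This inference fails. Consider the path $1$--$2$--$3$--$4$ with ordering $1,2,3,4$ (so $2$ cuts $E(1,2)$, $3$ cuts $E(2,3)$, $4$ cuts $E(3,4)$), each edge consisting of two singletons. Take additive valuations with $v_2(a)=10,\,v_2(b)=5$ for the two unit bundles $\{a\},\{b\}$ of $E(1,2)$ and $v_2(c)=v_2(d)=7$ for those of $E(2,3)$; symmetrically $v_3(c)=10,\,v_3(d)=5,\,v_3(e)=v_3(f)=7$; and let agent $1$ value $\{a\}$ most. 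Then $1$ picks $\{a\}$, $2$ picks $\{c\}$, $3$ picks $\{e\}$, and we get $v_2(X_1)=10>7=v_2(X_2)$ and $v_3(X_2)=10>7=v_3(X_3)$, i.e.\ an envy path $3\to 2\to 1$ of length two. The allocation is $\efx$ (all bundles are singletons), but Property~\ref{p4} is violated. More generally, your own argument shows only that every envy edge points from a later agent to an earlier one; that is perfectly compatible with arbitrarily long envy paths.

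The paper resolves this by constructing the sequence $\sigma$ \emph{adaptively} (Algorithm~\ref{alg:sequence}): it grows $\sigma$ simultaneously from both ends, maintaining the invariant that agents placed on the left never envy anyone while agents placed on the right never envy each other, so that the only possible envy is right-to-left and hence of path-length at most one. The decision of where to place each new agent, and which bundle she receives, depends on the partial allocation built so far; a static ordering cannot achieve this. Since Properties~\ref{p5}--\ref{p7} in Phase Two and the dumping argument in Phase Three both rely on the star structure guaranteed by Property~\ref{p4} (in particular, every envier must itself be non-envied for the final dump to be $\efx$), your proposal does not recover once Phase One fails. As a minor aside, the uniqueness of the envier in an $\efx$ orientation comes from \cref{one envy} alone and does not need triangle-freeness; the paper uses triangle-freeness only in Phase Three.
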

    
     Our proof consists of three main phases: in the first two phases, we deal with partial orientations. We will define a set of properties in each phase and try to satisfy all of them. Note that in the second phase, we aim to add properties to the output orientation of the first phase while preserving the properties defined in the first phase. Finally, in the third phase, we allocate the remaining items to a third party, changing our orientation to an allocation, and complete our $\efx$ allocation. It is notable that the first two phases can be applied to any multi-graph instance; i.e., we use the triangle-free condition only in the final phase.

     We will now introduce each phase and prove our main result in the following subsections.

\subsection{Phase One}

    We generalize the concept of cut configurations, introduced by \citet{ADKMR25}, to the setting of all multi-graphs. Accordingly, their definition relied strongly on the structure of a bipartite multi-graph for which they showed the existence of an $\efx$ allocation. We define the concept of configuration for every pair of vertices in any multi-graph as follows: 

\begin{definition} [Cut Configuration]
    For every pair of agents $i$ and $j$, we fix two partitions of the set $E(i, j)$. The partition $(C_{i, j}^{(1)}, C_{i, j}^{(2)})$ of $E(i, j)$ is an $\efx$ cut for agent $i$ and is called the \emph{$i$-cut} configuration between agents $i$ and $j$. Similarly, the \emph{$j$-cut} configuration is the partition $(C_{j, i}^{(1)}, C_{j, i}^{(2)})$ of $E(i, j)$ that is an $\efx$ cut for agent $j$. Note that the first index indicates which agent makes the cut.

\end{definition}
     We aim to fix a specific configuration between every pair of vertices in our proof. We consider a sequence of agents denoted by $\sigma = [\sigma(1), \sigma(2), \cdots, \sigma(n)]$, where $\sigma(i) \in [n]$. Note that the sequence $\sigma$ is a permutation of agents in $[n]$. Every such sequence $\sigma$ imposes a fixed set of configurations on our input instance. More precisely, if $i$ precedes $j$ in $\sigma$ (denoted by $i \prec_\sigma j$), then we use the $j$-cut configuration for the set $E(i, j)$ in our construction. Formally, we define \emph{unit bundles} as follows:

 \begin{definition}[Unit Bundles]
    For every sequence $\sigma$ of agents, if $i \prec_\sigma j$, we define the bundles $C_{j,i}^{(1)}$ and $C_{j,i}^{(2)}$ of the \emph{$j$-cut} configuration between agents $i$ and $j$ as the \emph{unit bundles} between these two agents. We say a unit bundle $C$ is incident for agent $i$, if $C$ is a unit bundle of $E(i,j)$ for some agent $j$.
 \end{definition}

Thus, we always deal with a sequence that has fixed configurations in our graph. In this phase, we aim to find an appropriate sequence that satisfies certain properties. We use the concept of \emph{available} edges for an agent, introduced by \citet{ADKMR25}, during our proof several times.

\begin{definition}[$A_{i, j}(\X, \sigma)$ and $A_{j, i}(\X, \sigma)$]
    For a partial allocation $\X = \langle X_1, \cdots, X_n\rangle$ 
    and a sequence $\sigma$, and any two agents $i$ and $j$ where $i \prec_\sigma j$ , we define the sets $A_{i, j}(\X, \sigma) \subseteq E(i, j)$ and $A_{j, i}(\X, \sigma) \subseteq E(i, j)$ in four cases using Table \ref{tab:Ai,j definition}\footnote{Any other case will not happen during our algorithm. Therefore, these sets are not defined in other cases.}. We call $A_{i, j}(\X, \sigma)$ the set of \emph{available} edges in $E(i,j)$ for agent $i$, and similarly, $A_{j, i}(\X, \sigma)$ the set of \emph{available} edges in $E(i,j)$ for agent $j$.  
\end{definition}

    In our construction, for every pair of agents $i$ and $j$, we allocate at most one of the unit bundles in $E(i, j)$ to agent $i$. Accordingly, $\aij$ is in fact the most valuable unallocated unit bundle that agent $i$ can get from the set $E(i, j)$ subject to the mentioned restriction. Moreover, in our construction, if an agent like $i$ receives one item from a unit bundle, she receives the entire bundle, i.e., for an agent $i$ and unit bundle $C$, if there exists a good $g \in X_i \cap C$, then $C \subseteq X_i$.

    \begin{table}
    \centering
    \begin{tabular}{|c|c|c|}
    \hline
    & $A_{i,j}(\X, \sigma)$ & $A_{j,i}(\X, \sigma)$\\
    \hline
    $E(i,j) \cap X_k = \emptyset \ \text{for all} \ k \in [n]$ & $\argmax\limits_{C \in \{C_{j, i}^{(1)}, C_{j, i}^{(2)}\}} \{v_i(C)\}$ & $\argmax\limits_{C \in \{C_{j, i}^{(1)}, C_{j, i}^{(2)}\}} \{v_j(C)\}$ \\
    \hline
    $X_i \cap E(i,j) = \emptyset, X_j \cap E(i,j) \neq \emptyset$  & $E(i, j) \setminus X_j$ & $\emptyset$ \\
    \hline
    $ X_i \cap E(i,j) \neq \emptyset, X_j \cap E(i,j) = \emptyset$ & $\emptyset$ & $E(i, j) \setminus X_i$\\
    \hline
    $ X_i \cap E(i,j) \ne \emptyset, X_j \cap E(i,j) \ne \emptyset$ & $\emptyset$ & $\emptyset$\\
    \hline
    \end{tabular}
    \caption{Definition of $A_{i,j}(\X, \sigma)$ and $A_{j,i}(\X, \sigma)$ when $i \prec_\sigma j$ and the \emph{$j$-cut} configuration is fixed between agents $i$ and $j$.}\label{tab:Ai,j definition}
    \end{table}
    
Next, we introduce the key properties that we aim to satisfy in the first phase of our algorithm.

\paragraph{Key Properties:}
    In this phase, we search for a partial allocation $\X = \langle X_1, \cdots, X_n\rangle$ and a sequence $\sigma = [\sigma(1), \sigma(2), \cdots, \sigma(n)]$
    satisfying the following properties:

\begin{enumerate}
    \item \label{p1} $\X$ is an $\efx$ orientation.

    \item \label{p2} For any two agents $i$ and $j$ where $i \prec_\sigma j$, the items in $E(i, j)$ must be allocated according to the \emph{$j$-cut} configuration $(C_{j, i}^{(1)}, C_{j, i}^{(2)})$ to either one of their endpoints. Formally, one of the following must hold in $\X$: either no item in $E(i,j)$ is allocated or one of the unit bundles in the $j$-cut configuration is allocated to either $i$ and $j$ and the other unit bundle is unallocated or both unit bundles in the $j$-cut configuration are allocated, one to $i$ and one to $j$. Note that additional items may be allocated to $i$ or $j$.

    \item \label{p3} For any agent $i$ and any unallocated unit bundle $C$, we have $v_i(X_i)\geq v_i(C)$. 

    \item \label{p4} The length of any envy path\footnote{$i_0 \rightarrow i_1 \rightarrow \cdots \rightarrow i_l$ is an envy path if each agent $i_q$ envies the agent $i_{q + 1}$.} in the graph is at most one. 
\end{enumerate}

The first three conditions are easy to ensure. Take an arbitrary $\sigma$ and fix the configuration between agents $i$ and $j$ to \emph{$j$-cut} whenever $i$ precedes $j$. Let the agents pick their favorite available unit bundle in order. Consider an agent $i$ and assume that it picks a unit bundle in $E(i,j)$. If $X_j \cap E(i,j) = \emptyset$, agent $i$ does not envy agent $j$. Otherwise, if agent $i$ precedes agent $j$ in sequence $\sigma$, the \emph{$j$-cut} configuration is used for $E(i,j)$ and agent $j$ is happy with both unit bundles of $E(i, j)$. If agent $j$ precedes agent $i$, the \emph{$i$-cut} configuration is used and agent $j$ picks before agent $i$. Thus, she will pick the unit bundle that is more valuable to her. Therefore, the only tricky part of phase one is to satisfy property \ref{p4} while maintaining the first three properties. 

We now prove that there exists an allocation $\X$ and sequence $\sigma$ satisfying all four properties simultaneously.

\subsubsection{Satisfying Properties (1)-(4)}

    We design an iterative algorithm to compute a partial allocation $\X$ and sequence $\sigma$ simultaneously such that they satisfy properties \ref{p1}-\ref{p4}. 
    
    In each step of our algorithm, we use two disjoint partial sequences $\sigma_L$ and $\sigma_R$ together with a set $U$ containing agents that are not in any of these sequences. Furthermore, $\X = \langle X_1, \cdots, X_n\rangle$ denotes the partial allocation in each step. More precisely, we will gradually build the sequence $\sigma$ from both sides by appending agents to the end of the sequence $\sigma_L$ or to the beginning of the sequence $\sigma_R$. This means that in each step, our final sequence $\sigma$ has the form of $\sigma = [\sigma_L(1), \cdots, \sigma_L(|\sigma_L|), u_1, \cdots, u_{|U|}, \sigma_R(1), \cdots, \sigma_R(|\sigma_R|)]$ where $u_i \in U$. Basically, the exact place of agents outside of $U$ is determined, and we only do not know the ordering of agents in $U$ in our final sequence. 
    Note that the configuration between any agent $i \in [n] \setminus U$ and any other agent $j$ is fixed in any step of the algorithm. 
    Also, for an agent $i\in U$, the bundle $X_i$ is not yet determined. In each step, we will gradually build $\sigma$ by fixing the position of some agents in $U$ in our sequence and allocating a unit bundle to each of them, and then removing them from $U$. 

    We begin with $\sigma_L = \sigma_R = []$ and $U = [n]$. In each step, we will decrease $|U|$ by at least one, while maintaining the following invariants:

\begin{enumerate}[(i)]
    \item \label{i1} Every agent in $\sigma_L$ and $\sigma_R$ has received a unit bundle that is incident to her and another agent in $\sigma_L$ or $\sigma_R$, that is, no agent in these two partial sequences has received any item incident to agents in $U$. 

    \item \label{i2} For any agent $i \in U$, we have $X_i = \emptyset$.

    \item \label{i3} The allocation $\X$ is an orientation.

    \item \label{i4} Agents in $\sigma_L$ and $\sigma_R$ do not strongly envy any bundle in $\X$.

    \item \label{i5} Agents in $\sigma_L$ do not envy any agent in $\X$.

    \item \label{i6} Agents in $\sigma_R$ do not envy each other.

    \item \label{i7} For any agent $i\notin U$ and any other agent $j \in [n]$ property  \ref{p2} holds. Additionally, for any agent $i \notin U$, property \ref{p3} holds. 
\end{enumerate}

    In the beginning, we have $\sigma_L = \sigma_R = []$ and $U = [n]$, which clearly satisfy all invariants. Thus, it remains to show that we can move in the space of partial allocations satisfying invariants \ref{i1}-\ref{i7} while decreasing the number of agents in $U$.

    Now, we go over a single step of our algorithm and show that we can decrease $|U|$ while maintaining all invariants. Consider a single step of our algorithm with $U \neq \emptyset$. Note that the configuration between any agent in $U$ and any agent in $[n] \setminus U$ is fixed before this step. Let $i$ be an arbitrary agent in $U$. We will append $i$ to the beginning of $\sigma_R$, and remove $i$ from $U$. Therefore, for any agent $u \in U$, the configuration between $u$ and $i$ will be fixed to \emph{$i$-cut}. Thereby, we have all the configurations in which agent $i$ is involved in. Although $\sigma$ is not complete yet, with a slight abuse of notation, we use the definition of $A_{i, j} (\X, \sigma)$. This is indeed valid, because the unit bundles incident to agent $i$ are fixed. Let $j = \argmax_{j \in [n]} v_i(A_{i, j} (\X, \sigma))$. If $j \notin U$, we allocate $A_{i, j} (\X, \sigma)$ to agent $i$ and stop. Otherwise, we add $j$ to the end of $\sigma_L$ and remove $j$ from $U$, thereby fixing configurations involving $j$. Let $k = \argmax_{k \in [n]} v_j(A_{j, k} (\X, \sigma))$. There are three cases:

\begin{itemize}
    \item \textbf{Case 1: $k = i$.} We allocate $A_{j, i} (\X, \sigma)$ to agent $j$ and then repeat the procedure for agent $i$, that is we will again find an agent $j = \argmax_{j \in [n]} v_i(A_{i, j} (\X, \sigma))$ and continue.

    \item \textbf{Case 2: $k \neq i$ and $k \notin U$.} We allocate $A_{j, k} (\X, \sigma)$ to agent $j$ and $A_{i, j} (\X, \sigma)$ to agent $i$ and stop.

    \item \textbf{Case 3: $k \neq i$ and $k \in U$.} We allocate $A_{j, k} (\X, \sigma)$ to agent $j$ and $A_{i, j} (\X, \sigma)$ to agent $i$. Then, we add $k$ to the beginning of $\sigma_R$ and then repeat the procedure that we did for agent $i$ exactly for agent $k$, that is we will again find an agent $j = \argmax_{j \in [n]} v_k(A_{k, j} (\X, \sigma))$ and continue.
\end{itemize}

    Formally, the procedure is defined using Algorithm \ref{alg:sequence}. Lemma \ref{lem:sequence} proves that this procedure maintains our seven invariants and decreases $|U|$:

\begin{lemma}\label{lem:sequence}
    \cref{alg:sequence} maintains invariants \ref{i1}-\ref{i7} and decreases $|U|$.
\end{lemma}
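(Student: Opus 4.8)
The plan is to prove \cref{lem:sequence} by analyzing the three cases of the single step described above and verifying that each of the seven invariants is preserved, then observing that $|U|$ strictly decreases. The decrease in $|U|$ is almost immediate: in the step we always remove at least one agent (the chosen $i$, and possibly $j$ and $k$) from $U$, and the only way the procedure loops back is via Case 1 or Case 3, both of which have already removed an agent from $U$ before repeating; so termination and the decrease follow once we check the loop reduces to a situation with strictly smaller $U$ that still satisfies the invariants, i.e.\ the whole step can be viewed as a finite chain of sub-steps each shrinking $U$.

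The core of the argument is the invariant maintenance, and I would organize it invariant by invariant. Invariants \ref{i2} and \ref{i3} are structural: \ref{i2} holds because we only ever allocate a bundle to an agent at the moment we remove her from $U$, and \ref{i3} holds because every bundle we allocate is of the form $A_{i,j}(\X,\sigma)\subseteq E(i,j)\subseteq E(i)$, an incident unit bundle (using the remark after the definition of $A_{i,j}$ that it is a unit bundle in the relevant table case, or a subset of the appropriate $E(i,j)$). Invariant \ref{i1} requires checking that when an agent is placed into $\sigma_L$ or $\sigma_R$ and given $A_{i,j}(\X,\sigma)$, the other endpoint $j$ is \emph{also} out of $U$: for the initial $i$ this is forced because $j=\argmax_j v_i(A_{i,j})$ is either already out of $U$ (we stop) or we pull $j$ out of $U$ before allocating to $i$; similarly for the nested $j$ and $k$. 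Invariant \ref{i7} (property \ref{p2} for fixed agents, property \ref{p3} for fixed agents) is where the bookkeeping lives: \ref{p2} is maintained because we only allocate unit bundles of the correct ($j$-cut when $i\prec_\sigma j$) configuration—here the placement rule "$i$ to $\sigma_R$, so $u$-to-$i$ configuration is $i$-cut for all $u\in U$" is exactly engineered so that when a later agent is processed, the configuration on $E(i,\cdot)$ is already the right one; and \ref{p3} is maintained because $A_{i,j}(\X,\sigma)$ is by definition the most valuable available unit bundle incident to $i$, so after $i$ takes it, any still-unallocated incident unit bundle has value at most $v_i(X_i)$, and unit bundles incident to other fixed agents were already dominated by those agents' bundles and nothing we do in this step re-exposes a more valuable unallocated bundle to $i$ (this last point needs the observation that once a unit bundle is allocated it stays allocated, and the ones we newly allocate were available, hence dominated).

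The genuinely delicate invariants are \ref{i4}, \ref{i5}, \ref{i6} — the envy conditions, and in particular \ref{i5} (agents in $\sigma_L$ envy no one) is the heart of property \ref{p4}. I would argue: an agent $j$ newly appended to $\sigma_L$ receives $A_{j,k}(\X,\sigma)$, her favorite available bundle, so she does not strongly envy (in fact does not envy) any bundle she could have taken instead; for the bundle $X_k$ of the partner $k$, either $k=i$ and we must check $j$ does not envy $A_{j,i}$-complemented $i$'s bundle — but $i$ only ever holds a single unit bundle of $E(i,j)$ and $j$ picked the better one for herself among $E(i,j)$'s unit bundles (since with $j\prec_\sigma$-nothing-yet the $i$-cut/$j$-cut choice and the pick order make $j$ choose first), so $j$ weakly prefers her bundle; or $k\notin U$ was already fixed, and the configuration on $E(j,k)$ was fixed before with $j$ added to $\sigma_L$ \emph{after} $k$, making it the $k$-cut configuration, so $k$ is happy with both unit bundles but that does not immediately bound $j$'s envy — here I would use that $j$ picked $A_{j,k}$ as her favorite available bundle, so $v_j(X_j)=v_j(A_{j,k})\ge v_j(A_{j,k'}) $ for every other available $k'$-bundle, and $X_k$ restricted to $E(j,k)$ is an available unit bundle at the time (or was dominated earlier), giving $v_j(X_j)\ge v_j(X_k)$. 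Symmetrically, \ref{i6} needs that agents in $\sigma_R$ don't envy each other: two $\sigma_R$ agents $i$ and $k$ with $k$ added after $i$ — $i$ precedes $k$ so $E(i,k)$ uses the $k$-cut, meaning $i$ is \emph{not} guaranteed happy, but $i$ received $A_{i,\cdot}$ which was her global favorite available bundle at her turn, and $X_k\cap E(i,k)$ is a single unit bundle that was available then, so $v_i(X_i)\ge v_i(X_k)$; and $k$ doesn't envy $i$ because $k$ picked her favorite. I expect the main obstacle to be exactly this: carefully tracking, across the Case-1/Case-3 loop-backs, that the bundle currently held by a partner agent was \emph{available to the envier at the moment the envier made her choice} (or was dominated even earlier by property \ref{p3}), since later allocations to third parties could in principle change which bundles look available — the fix is to note that allocations are monotone (a unit bundle never becomes unallocated) so "available at my turn" value lower-bounds are never invalidated, and that the partner's bundle on the shared edge set is always exactly one unit bundle of the fixed configuration. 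Finally, \ref{i4} (no strong envy by $\sigma_L\cup\sigma_R$ agents) follows from \ref{i5} and \ref{i6} together with the single-unit-bundle structure: strong envy would require the envied bundle minus a good to still beat one's own, but on any shared edge set the envied agent holds a whole unit bundle which is $\efx$-feasible for the envier by the cut-configuration property, and non-shared parts are empty in an orientation, so strong envy is impossible wherever (weak) envy is already ruled out, and where weak envy is not ruled out (the at-most-one envied-partner case) the envied bundle is a single $\efx$-feasible unit bundle, so again no strong envy.
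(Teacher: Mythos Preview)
Your overall strategy matches the paper's: verify the structural invariants \ref{i1}--\ref{i3}, \ref{i7} directly from the algorithm's bookkeeping, then handle the envy invariants \ref{i4}--\ref{i6} via the greedy choice and the cut-configuration direction. But two points in your execution do not close.

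First, in your treatment of \ref{i6} you have the $\sigma_R$ ordering backwards. The algorithm builds $\sigma_R$ by \emph{prepending}, so if $k$ is added temporally after $i$ then $k \prec_\sigma i$, and the configuration on $E(i,k)$ is the $i$-cut, not the $k$-cut. More importantly, your claim ``$k$ doesn't envy $i$ because $k$ picked her favorite'' is not a valid argument: at $k$'s turn, $i$ has already been allocated $X_i$, so $X_i\cap E(i,k)$ (if nonempty) is \emph{not} available to $k$, and $k$'s greedy pick gives no bound on $v_k(X_i)$. The actual reason $k$ does not envy $i$ is that $X_i\cap E(k)=\emptyset$: by invariant \ref{i1}, $i$'s bundle is a unit bundle of $E(i,j')$ for some $j'$ already in $\sigma_L\cup\sigma_R$ at $i$'s turn, and since $k$ was still in $U$ then, $j'\neq k$, hence no edge of $X_i$ is incident to $k$. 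This is the clean use of \ref{i1} that the paper exploits throughout, and which you underuse: whenever an agent is freshly removed from $U$, \emph{no} previously fixed agent holds anything incident to her, so she trivially does not envy them. Leaning on this observation collapses most of your case analysis for \ref{i5} and \ref{i6}.

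Second, your \ref{i4} argument is incomplete as written. Saying ``the envied bundle is a single $\efx$-feasible unit bundle, so no strong envy'' conflates the direction of feasibility. What you need is: for $i\in\sigma_R$ possibly envying $j\in\sigma_L$ with $X_j\cap E(i,j)=C_2$, the $i$-cut gives $v_i(C_1)\ge v_i(C_2\setminus g)$ for all $g$, and then the greedy bound $v_i(X_i)\ge v_i(A_{i,j})=v_i(C_1)$ (which you established under \ref{i7}) chains to $v_i(X_i)\ge v_i(C_2\setminus g)$. You have both ingredients elsewhere in the write-up; they just need to be explicitly composed here.
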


\begin{proof}

    First, note that the procedure defined by \cref{alg:sequence} terminates since it decreases $|U|$ by at least one. By construction, whenever an agent receives a unit bundle, it has already been removed from $U$ and been added to either $\sigma_L$ or $\sigma_R$. Moreover, whenever an agent receives a unit bundle incident to an agent in $U$, this agent will be removed from $U$ and will be added to either $\sigma_L$ or $\sigma_R$. Also, clearly, agents only receive incident unit bundles. Therefore, invariants \ref{i1}, \ref{i2}, and \ref{i3} hold. Every agent $i \notin U$ receives only a single unit bundle. Accordingly, whenever such an agent receives a unit bundle, it is chosen greedily among all of her unallocated incident unit bundles, meaning that invariant \ref{i7} is maintained. Therefore, we need to prove that invariants \ref{i4}-\ref{i6} hold.

    Consider agents $i$ and $j$ as defined in the beginning of the \cref{alg:sequence}. Note that if $j \notin U$, only agent $i$ is added to $\sigma_R$ and is receiving a unit bundle from $E(i, j)$. By invariant \ref{i7}, agents previously in $\sigma_L$ or $\sigma_R$ do not envy agent $i$. Moreover, by invariant \ref{i1}, agent $i$ will not envy any agent previously in $\sigma_L$ or $\sigma_R$, as they do not have a unit bundle valuable to agent $i$. Thus, invariants \ref{i4}-\ref{i6} are maintained. 

    Note that during \cref{alg:sequence}, agents that are added to $\sigma_R$ are denoted by $i$, and agents that are added to $\sigma_L$ are denoted by  $j$.

    First, we show that \cref{alg:sequence} maintains invariant \ref{i5}.  
    Whenever an agent $j$ is added to the end of $\sigma_L$, she greedily picks a unit bundle. Note that whenever agent $j$ receives a unit bundle, the other agents in $\sigma_L$ or $\sigma_R$ have not yet been assigned a unit bundle incident to agent $j$ by invariant \ref{i1}, meaning that agent $j$ does not envy any other agent. 
    By invariant \ref{i7}, agents previously in $\sigma_L$ do not envy the unit bundle that $j$ is receiving. 
    Therefore, invariant \ref{i5} holds.

    Next, we show that \cref{alg:sequence} maintains invariant \ref{i6}.  
    Whenever an agent $i$ is added to $\sigma_R$ and eventually receives a unit bundle, she does not envy agents who were previously added to $\sigma_R$ since they are not receiving a unit bundle incident to agent $i$ by invariant \ref{i1}. Moreover, agents previously in $\sigma_R$ do not envy agent $i$ by invariant \ref{i7}. 
    Therefore, invariant \ref{i6} also holds. 

    Finally, we show that \cref{alg:sequence} maintains invariant \ref{i4}.   
    Note that by invariant \ref{i5}, we only need to show that agents in $\sigma_R$ do not strongly envy any bundle in $\X$. Consider an agent $j$ in $\sigma_L$ and an agent $i$ in $\sigma_R$. If $X_j \cap E(i, j) = \emptyset$, then, agent $i$ does not envy agent $j$. Otherwise, agent $j$ has received a unit bundle from $E(i, j)$ and the configuration between agents $i$ and $j$ is fixed to \emph{$i$-cut}. Therefore, whenever agent $i$ receives a unit bundle, $\aij$ is the unit bundle in $E(i, j)$ that is not allocated to $j$, and since agent $i$ picks an available unit bundle incident to her greedily, she receives a unit bundle with a value at least as much as $\aij$ to her. 
    Thus, by the definition of our configurations, agent $i$ will not strongly envy agent $j$, meaning that invariant \ref{i4} also holds.   
\end{proof}


\begin{algorithm}[h]
\caption{Sequence Augmentation}\label{alg:sequence}
\KwIn{Partial allocation $\X$, sequences $\sigma_L$ and $\sigma_R$, and set $U$ satisfying invariants (i)-(vii).}
\KwOut{Partial allocation $\X$, sequences $\sigma_L$ and $\sigma_R$, and set $U$ satisfying invariants (i)-(vii) with strictly decreased $|U|$.}

$i \gets \text{ an arbitrary vertex in $U$}$ 

$\sigma_R 
\gets [i, \sigma_R (1), \cdots, \sigma_R(|\sigma_R|)]$

$U \gets U \setminus \{i\}$

$j \gets \argmax_{j \in [n]} v_i(A_{i, j} (\X, \sigma))$

\While{$j \in U$}{

    $\sigma_L \gets [\sigma_L (1), \cdots, \sigma_L(|\sigma_L|), j]$
    
    $U \gets U \setminus \{j\}$

    $k \gets \argmax_{k \in [n]} v_j(A_{j, k} (\X, \sigma))$

    \If{$k = i$}{
    
        $X_j \gets A_{j, i} (\X, \sigma)$

        $j \gets \argmax_{j \in [n]} v_i(A_{i, j} (\X, \sigma))$
        
    }
    \ElseIf{$k \notin U$}{

        $X_j \gets A_{j, k} (\X, \sigma)$

        \Break   
    
    }
    \ElseIf{$k \in U$}{

        $X_j \gets A_{j, k} (\X, \sigma)$

        $X_i \gets A_{i, j} (\X, \sigma)$

        $i \gets k$   

        $\sigma_R \gets [i, \sigma_R (1), \cdots, \sigma_R(|\sigma_R|)]$

        $U \gets U \setminus \{i\}$

        $j \gets \argmax_{j \in [n]} v_i(A_{i, j} (\X, \sigma))$

        
    
    }

}

$X_i \gets A_{i, j} (\X, \sigma)$

\Return $\X, \sigma_L, \sigma_R, U$

\end{algorithm}

Now, we are able to prove the main Lemma of this section:

\begin{lemma}
    There exists a complete sequence $\sigma$ and a partial allocation $\X$ satisfying properties \ref{p1}-\ref{p4}.
\end{lemma}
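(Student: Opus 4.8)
The plan is to iterate \cref{alg:sequence} starting from the trivial state $\sigma_L = \sigma_R = []$, $U = [n]$, which satisfies invariants \ref{i1}--\ref{i7} vacuously. By \cref{lem:sequence}, each invocation of the algorithm preserves all seven invariants while strictly decreasing $|U|$; since $|U|$ is a nonnegative integer, after at most $n$ invocations we reach a state with $U = \emptyset$. At that point $\sigma := [\sigma_L(1), \dots, \sigma_L(|\sigma_L|), \sigma_R(1), \dots, \sigma_R(|\sigma_R|)]$ is a complete permutation of $[n]$ (every agent left $U$ and entered exactly one of the two partial sequences, and positions are never reassigned), and $\X$ is the partial allocation produced along the way. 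It remains to check that, when specialized to $U = \emptyset$, the invariants yield exactly the four desired properties.

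The translation is as follows. Invariant \ref{i3} gives that $\X$ is an orientation. Invariant \ref{i7}, now applied to all agents since $[n] \setminus U = [n]$, immediately gives properties \ref{p2} and \ref{p3}. For property \ref{p1}, invariant \ref{i4} states that no agent in $\sigma_L$ or $\sigma_R$ strongly envies any bundle; since $U = \emptyset$, every agent is in $\sigma_L$ or $\sigma_R$, so there is no strong envy at all, and combined with the orientation property this says $\X$ is an $\efx$ orientation. For property \ref{p4}, I would argue that the envy graph has no path of length two: by invariant \ref{i5}, agents in $\sigma_L$ envy no one, so they cannot be the tail of an envy edge, i.e. they can only be sinks in the envy graph; by invariant \ref{i6}, agents in $\sigma_R$ do not envy each other. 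Hence any envy edge goes from a $\sigma_R$-agent to a $\sigma_L$-agent (an edge within $\sigma_R$ is forbidden by \ref{i6}, and an edge out of $\sigma_L$ is forbidden by \ref{i5}). A path of length two would therefore need an edge leaving a $\sigma_L$-agent, which is impossible; so every envy path has length at most one.

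The main obstacle here is essentially nonexistent at this level, since all the real work has been pushed into \cref{lem:sequence}: the present lemma is just the termination-plus-bookkeeping wrapper. The one point that deserves a sentence of care is verifying that when the outer loop ends we genuinely have a complete, well-defined permutation $\sigma$ and that the earlier ``slight abuse of notation'' in using $A_{i,j}(\X,\sigma)$ during the runs has now been fully legitimized — but this is immediate because once $U = \emptyset$ all configurations (hence all unit bundles) are fixed and $\sigma$ is totally determined. I would also note explicitly that each call decreases $|U|$ by at least one (Case 3 may remove several agents in a single call, but one suffices), so at most $n$ calls are needed, giving the (pseudo-)polynomial running-time bound needed for \cref{thm:main_result} as a byproduct.
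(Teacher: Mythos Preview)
Your proof is correct and follows essentially the same route as the paper: iterate \cref{alg:sequence} from the empty state until $U=\emptyset$, then read off properties \ref{p1}--\ref{p4} from invariants \ref{i3}--\ref{i7} exactly as you do, with the envy-path bound coming from the observation that every envy edge must go from $\sigma_R$ to $\sigma_L$. The additional remarks about the permutation being well-defined and the at-most-$n$-calls bound are accurate extras that the paper leaves implicit.
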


\begin{proof}
    We begin with $\sigma_L = \sigma_R = []$ and $U = [n]$. While $U \neq \emptyset$, we decrease $|U|$ and maintain invariants \ref{i1}-\ref{i7} using  \cref{alg:sequence}. Finally, we achieve a complete sequence $\sigma$ and an allocation satisfying invariants \ref{i1}-\ref{i7}. We will prove that the output allocation $\X$ of this procedure satisfies properties \ref{p1}-\ref{p4}. By invariant \ref{i7}, it is clear that properties \ref{p2} and \ref{p3} are satisfied. Furthermore, invariant \ref{i4} proves that the allocation is $\efx$. Combining with invariant \ref{i3}, we have satisfied property \ref{p1}. Thus, it only remains to show that property \ref{p4} holds. By invariants \ref{i5} and \ref{i6}, we have that the only envy between agents can be from an agent in $\sigma_R$ to an agent in $\sigma_L$. This means that the length of any envy path is at most one, thereby satisfying property \ref{p4}.
\end{proof}

Now that we have proved the existence of an allocation $\X$ and sequence $\sigma$ satisfying properties \ref{p1}-\ref{p4}, we have completed our first phase. Before moving to the second phase, we give an intuitive observation on the sequence $\sigma$:

\begin{observation}

    Let $\sigma$ be the output sequence for phase one. We construct an allocation $\X'$ as follows: we fix the configurations imposed by $\sigma$, i.e., for any pair of agents $i$ and $j$ where $i \prec_\sigma j$, we use the \emph{$j$-cut} configuration between agents $i$ and $j$. Then, we run a greedy algorithm using $\sigma$ as a picking sequence. More precisely, the greedy algorithm consists of $n$ iterations, where in each iteration $k$, agent $\sigma(k)$ will greedily pick the most preferred unallocated unit bundle for her. The output of such a greedy algorithm using $\sigma$ as a picking sequence is identical to the allocation $\X$, which was the output of phase one, i.e., $\X' = \X$.
    
\end{observation}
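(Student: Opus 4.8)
The plan is to prove by induction on $t$ that the greedy process with picking sequence $\sigma$ assigns to agent $\sigma(t)$ exactly the bundle $X_{\sigma(t)}$ produced by \cref{alg:sequence}; ranging over all $t$ this gives $\X' = \X$. Fix $t$, write $a := \sigma(t)$, and assume the greedy process has so far assigned $X_{\sigma(1)}, \dots, X_{\sigma(t-1)}$ to $\sigma(1), \dots, \sigma(t-1)$. Then the agents who have already picked are exactly those preceding $a$ in $\sigma$, each holding its phase-one bundle, so a unit bundle $C \subseteq E(a,b)$ incident to $a$ is still available at step $t$ unless the neighbor $b$ sharing $C$ satisfies $b \prec_\sigma a$ and $X_b = C$. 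Since the bundles of $\X$ are pairwise disjoint, no neighbor of $a$ holds $X_a$, so $X_a$ is available; hence it suffices to show that $v_a(X_a) \ge v_a(C)$ for every available incident unit bundle $C$, after which $a$ picks $X_a$ (assuming a fixed tie-breaking rule shared by both procedures), completing the induction step.

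I would obtain this from the following claim: \emph{for every agent $a$ and every unit bundle $C \subseteq E(a,b)$ incident to $a$ with $v_a(C) > v_a(X_a)$, one has $X_b = C$ and $b \prec_\sigma a$} --- which is exactly what prevents such a $C$ from being available. To see $X_b = C$: by Property~\ref{p3}, $v_a(X_a) \ge v_a(C')$ for every unallocated unit bundle $C'$, so $C$ must be allocated in $\X$; since $\X$ is an orientation (Property~\ref{p1}) and $C \subseteq E(a,b)$, the edges of $C$ lie in $X_a \cup X_b$; they cannot lie in $X_a$, since then $C \subseteq X_a$, forcing $C = X_a$ as each agent holds at most one unit bundle in $\X$, contradicting $v_a(C) > v_a(X_a)$; hence $X_b = C$. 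To see $b \prec_\sigma a$, suppose instead $a \prec_\sigma b$. Then the fixed configuration on $E(a,b)$ is the $b$-cut and $C = X_b$ is one of its two bundles. Using invariant~\ref{i1} together with the bookkeeping of \cref{alg:sequence} --- agents entering $\sigma_L$ are appended, agents entering $\sigma_R$ are prepended, and $\sigma_L$ precedes $\sigma_R$ throughout $\sigma$ --- one checks that $a$ is assigned its bundle by \cref{alg:sequence} strictly before $b$ is. Consequently, at the moment $a$ picks, no edge of $E(a,b)$ has been allocated, so by the first row of Table~\ref{tab:Ai,j definition} the option $A_{a,b}(\X,\sigma)$ available to $a$ is whichever of the two $b$-cut bundles is more valuable to $a$, whence $v_a(A_{a,b}(\X,\sigma)) \ge v_a(C)$. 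Since \cref{alg:sequence} assigns each agent the incident unit bundle maximizing her value among those available at that moment, $v_a(X_a) \ge v_a(A_{a,b}(\X,\sigma)) \ge v_a(C)$, a contradiction. This proves the claim, hence the induction step, hence the observation.

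The step I expect to be the main obstacle is the timing statement inside the claim: that $a$ is served before $b$ in \cref{alg:sequence} whenever $a \prec_\sigma b$ and $b$ ultimately holds a unit bundle of $E(a,b)$. The difficulty is that the order in which \cref{alg:sequence} assigns bundles is quite different from the left-to-right order of $\sigma$ --- agents enter $\sigma_R$ by prepending and $\sigma_L$ by appending, the two partial sequences are interleaved within each ``chain'', and chains are handled one after another across successive invocations. The lever is invariant~\ref{i1}: at every call boundary no agent of $\sigma_L \cup \sigma_R$ holds an item incident to an agent of $U$; this forbids $a$'s bundle from being incident to any agent removed from $U$ in a later invocation, and inside a single invocation it forces $b$'s bundle, when incident to $a$, to be assigned only after $a$'s. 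Carrying out the resulting case analysis --- according to whether $a$ and $b$ are processed in the same invocation, and whether each lies in $\sigma_L$ or $\sigma_R$ --- is the one genuinely fiddly part; everything else is immediate from Properties~\ref{p1} and \ref{p3} and the greedy nature of the picks.
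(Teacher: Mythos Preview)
The paper states this observation without proof, so there is no reference argument to compare against; your inductive scheme and the reduction to the claim ``if $v_a(C) > v_a(X_a)$ for an incident unit bundle $C \subseteq E(a,b)$, then $C = X_b$ and $b \prec_\sigma a$'' are both correct and natural.

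Where you take a detour is in establishing $b \prec_\sigma a$. The timing argument you outline (tracking when $a$ and $b$ are served across the $\sigma_L/\sigma_R$ bookkeeping and across invocations of \cref{alg:sequence}, leaning on invariant~\ref{i1}) can be made to work, but it is unnecessary. Here is the one-line replacement: since $C = X_b$ and $v_a(X_b) = v_a(C) > v_a(X_a)$, agent $a$ envies agent $b$ in the output allocation $\X$; invariant~\ref{i5} then forces $a \notin \sigma_L$, so $a \in \sigma_R$, and invariant~\ref{i6} forces $b \notin \sigma_R$, so $b \in \sigma_L$; since the final sequence is $\sigma = [\sigma_L,\sigma_R]$, this gives $b \prec_\sigma a$ directly. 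This eliminates the entire case analysis you flagged as the ``main obstacle'' and reduces the observation to an immediate consequence of invariants~\ref{i5} and~\ref{i6} together with Properties~\ref{p1}--\ref{p3}. Your remark about a shared tie-breaking rule is still needed, since the paper's statement is only literally true under consistent tie-breaking between \cref{alg:sequence} and the greedy picking sequence.
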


\subsection{Phase Two}
    Let the allocation $\X$ and sequence $\sigma$ be the outputs of the previous phase. The sequence $\sigma$ will be fixed for the rest of our algorithm. We aim to add three more properties to the allocation $\X$. We begin by defining some useful notation and then introduce three new properties. Since $\sigma$ is fixed, we will drop it from our notation, for example, we write $A_{ij}(\X)$ instead of $A_{ij}(\X,\sigma)$.

\subsubsection{Some Useful Notations}

    For a partial allocation $\X$ 
    and a sequence $\sigma$ satisfying properties \ref{p1}-\ref{p4}, we introduce the following definitions:

\begin{itemize}

    \item Given any pair of agents $i,j$, we define $\bijo$, $\bijt$, $\bjio$, and $\bjit$\  as follows:

    \begin{itemize}

        \item     If both of the unit bundles in $E(i,j)$ are unallocated, then $\bijo = \bjit$, and they are one of the unit bundles in $E(i,j)$, and $\bijt=\bjio$ are the other unit bundle in $E(i,j)$.
    
        \item     If agent $i$ possess a unit bundle in $E(i,j)$, then $\bijo=\bijt=\emptyset$, and the same follows for agent $j$.

        \item    If agent $i$ does not possess a unit bundle in $E(i,j)$, and $j$ possess a unit bundle in $E(i,j)$, then $\bijo=\bijt$ and they are the unallocated unit bundle in $E(i,j)$,
        and the same follows for agent $j$.
    \end{itemize}

    \item For $i \in [n]$, let $\bio= \bigcup_{j \in [n]\setminus \{i\}} \bijo$ and 
    $\bit= \bigcup_{j \in [n]\setminus \{i\}} \bijt$.

    \item For $i \in [n]$, $\ui$ is the set of all unallocated edges incident to $i$. Note that $\bio, \bit \subseteq U_i(\X)$. 
\end{itemize}

Next, we introduce our new properties that we wish to add to our partial allocation:

\paragraph{Key Properties:} We search for a partial allocation $\X = \langle X_1, \cdots, X_n\rangle$ that satisfies the following properties in addition to properties \ref{p1}-\ref{p4}:

\begin{enumerate}[start = 5]

    \item \label{p5} For any non-envied agent $i \in [n]$, we have $\bio = \emptyset$.

    \item \label{p6} For any non-envied agent $i \in [n]$, we have $v_i(X_i) \ge v_i(\ui)$
    
    \item \label{p7} For any envied agent $i \in [n]$, if $j$ envies $i$, we have:  
                    $$v_i(X_i) \geq v_i(X_j \cup \bio) \text{ and } v_i(X_i) \geq v_i(X_j \cup \bit).$$ 
\end{enumerate}

    We will now introduce an algorithm that satisfies these properties by allocating or unallocating some unit bundles to agents.  The procedure is formally defined in Algorithm \ref{alg:prop567}. 
    The algorithm, first checks if there exists an agent $i$ for whom property \ref{p5} does not hold, i.e., $\bio \ne \emptyset$, 
    then it allocates $X_i \cup \bio$ to agent $i$.
    Whenever that property \ref{p5} holds, \cref{alg:prop567} checks if there exists an agent $i$ for whom property \ref{p6} does not hold, i.e., $v_i(X_i)<v_i(\ui)$. 
    Then, for every $j$ which there exists an unallocated unit bundle in $E(i,j)$, agent $i$ swaps it  with her unit bundle in $E(i,j)$, leaving the unit bundle from every other $j$
    untouched. Note that the previous $\ui$ is now a subset of the new $X_i$.
    Whenever that properties \ref{p5} and \ref{p6} hold, \cref{alg:prop567} checks if there exists an agent $i$ for whom property \ref{p7} does not hold, 
    so if agent $j$ envies $i$, then $v_i(X_i)< v_i(X_j \cup \biu)$ for some $u\in \{1,2\}$. 
    Then, agents $i$ and $j$ swap the unit bundles in $E(i,j)$ with each other, and then $X_i \cup \biu$ is allocated to agent $i$.

    After the termination of our algorithm, by its definition, properties \ref{p5}-\ref{p7} are satisfied. Therefore, all we need to show is that our algorithm terminates in polynomial time while maintaining the first four properties. We will show that each if statement preserves our first four properties.

\begin{algorithm}
\caption{Satisfying (5)-(7)}\label{alg:prop567}
\KwIn{Sequence $\sigma$ and allocation $\X$ with properties  \ref{p1}-\ref{p4}}
\KwOut{Sequence $\sigma$ and allocation $\X$ with properties \ref{p1}-\ref{p7}}

\While{at least one of properties (5)-(7) is not satisfied}{
\If{there exists a non-envied agent $i\in [n]$ such that $\bio \neq \emptyset$}{
    $X_i\gets X_i \cup \bio$  \label{non-envied-shares} \\
}
\ElseIf{there exists a non-envied agent $i\in [n]$ such that $v_i(X_i)<v_i(\ui)$}{
    let $T = \{j: E(i,j)\cap \ui=\emptyset\}$\\
    $X_i\gets (X_i\cap \bigcup_{j\in T}E(i,j)) \cup \ui$\\
}
\ElseIf{there exists an agent $i$ envied by $j$ such that $v_i(X_i)< v_i(X_j \cup \biu)$ for some $u\in \{1,2\}$}{
    Let $(C_1,C_2)$ be the unit bundles of $E(i, j)$.\\
    Swap unit bundles $C_1$ and $C_2$ between agents $i$ and $j$. \label{swap}\\
    $X_i\gets X_i \cup \biu$\label{add after swap}\\
}
}

\Return $\X, \sigma$
\end{algorithm}

    We now introduce a potential function to show that our algorithm terminates after a polynomial number of iterations. Let $\phi_1(\X)$ be the number of envied agents, $\phi_2(\X)$ be the number of non-envied agents violating property \ref{p6}, and $\phi_3(\X)$ be the number of non-envied agents breaking property \ref{p5}. We define the lexicographic potential function $\phi(\X)$ to be 
    $$ \phi(\X) = \langle \phi_1(\X),\phi_2(\X),\phi_3(\X) \rangle ,$$‌ 
    which we will show that it decreases in lexicographic order in each iteration.

\begin{observation}
\label{b remains ub}
    If the allocation $\X$ and sequence $\sigma$ satisfy property \ref{p2}, then
    for all agents $i$ and $j$, we have that $\bijo, \bijt \in \{C^{(1)},C^{(2)}, \emptyset\}$, where $C^{(1)},C^{(2)}$ are the unit bundles between $i$ and $j$.  
    Moreover, if \xs\ satisfy property \ref{p3}, too, then for any pair of  
    agents $i$ and $j$, we have $v_i(X_i) \ge v_i(\bijo)$ and $v_i(X_i) \ge v_i(\bijt)$. 
\end{observation}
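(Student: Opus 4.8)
\textbf{Proof proposal for Observation \ref{b remains ub}.}

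The plan is to unpack the definition of $\bijo$ and $\bijt$ case by case, using property \ref{p2}, and in each case either exhibit the bundle as one of $C^{(1)}, C^{(2)}$, or as $\emptyset$, and then bound its value by $v_i(X_i)$ using property \ref{p3}. First I would recall that property \ref{p2} says exactly that the items of $E(i,j)$ are arranged according to the fixed $j$-cut (equivalently, the unit bundle) configuration: either nothing in $E(i,j)$ is allocated, or one unit bundle is allocated (to $i$ or to $j$) and the other is not, or both unit bundles are allocated, one to each of $i,j$. So the three bullet cases in the definition of $\bijo,\bijt$ exactly cover the situations permitted by property \ref{p2}, and in each of them the $B$-bundles are defined to be either a whole unit bundle $C^{(1)}$ or $C^{(2)}$, or $\emptyset$. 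This immediately gives the first claim $\bijo,\bijt \in \{C^{(1)},C^{(2)},\emptyset\}$. The one small point to check is that property \ref{p2} guarantees no "partial" unit bundle can be allocated — but this is built into how the unit bundles were fixed in Phase One (if an agent gets one item of a unit bundle she gets all of it), and property \ref{p2} records precisely this.

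For the value bound, I would split on whether $\bijo$ (resp. $\bijt$) is $\emptyset$ or a unit bundle. If it is $\emptyset$, then $v_i(\emptyset) = 0 \le v_i(X_i)$ by nonnegativity of $v_i$, so we are done. If $\bijo$ equals one of the unit bundles $C^{(u)}$ of $E(i,j)$, then by the definition of $\bijo$ this unit bundle is unallocated in $\X$ (in the first bullet case both unit bundles are unallocated; in the third bullet case the $B$-bundle is explicitly "the unallocated unit bundle"; the second bullet case makes $\bijo = \emptyset$ and is already handled). An unallocated unit bundle incident to $i$ is exactly the kind of object property \ref{p3} talks about, so $v_i(X_i) \ge v_i(C^{(u)}) = v_i(\bijo)$. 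The identical argument applies to $\bijt$. This is essentially all there is to it; I would write it as a short case analysis.

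I do not expect any real obstacle here — the observation is a bookkeeping consequence of the definitions and of properties \ref{p2} and \ref{p3}. The only place that needs a sentence of care is making sure the "$j$ possesses a unit bundle, $i$ does not" bullet and the "both allocated" situation are consistent with property \ref{p2}: in the latter there is no unallocated unit bundle in $E(i,j)$, so $\bijo = \bijt = \emptyset$ and the bound is trivial; one should note that the third bullet of the definition (one agent holds a unit bundle, the other does not) is the only one producing a nonempty $B$-bundle in the "mixed" regime, and there the surviving bundle is unallocated, so property \ref{p3} applies. Beyond that the proof is a direct citation of the two properties.
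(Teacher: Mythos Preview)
Your proposal is correct and matches the paper's treatment: the paper states this as an observation with no explicit proof, and your case analysis simply spells out why it is immediate from the definition of $\bijo,\bijt$ together with properties \ref{p2} and \ref{p3}. One cosmetic point: rather than asserting $v_i(\emptyset)=0$, it is cleaner to invoke monotonicity ($\emptyset \subseteq X_i$) to get $v_i(\emptyset)\le v_i(X_i)$, since the paper only assumes monotone nonnegative valuations.
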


\begin{lemma} \label{preserve 1-4}
    If $(\X,\sigma)$ satisfies properties \ref{p1}-\ref{p4}, and for some unenvied agent $i$, 
    we take some unit bundles from agent $i$, and allocate some other unallocated incident unit bundles to $i$
    such that agent $i$ receives at most one unit bundle from any agent $j$, and $v_i(X_i)$ does not decrease, then $(\X,\sigma)$ satisfies properties \ref{p1}-\ref{p4}. Furthermore, no new envy from an agent to another will be created.
\end{lemma}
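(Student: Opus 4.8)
The statement: starting from $(\X, \sigma)$ satisfying P1–P4, we modify only the bundle $X_i$ of some unenvied agent $i$, by removing some incident unit bundles from $X_i$ and adding some previously-unallocated incident unit bundles, subject to (a) $i$ ends up with at most one unit bundle from each neighbor $j$, and (b) $v_i(X_i)$ does not decrease. We must show P1–P4 still hold and no new envy edge is created. Let me think about what can go wrong and how to organize this.

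Let me denote the new bundle $X_i'$ and old bundle $X_i$, with $v_i(X_i') \ge v_i(X_i)$.

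**Step 1: It remains an orientation (part of P1) and P2 holds.**
All unit bundles are subsets of $E(i,j)$ for various $j$, so adding incident unit bundles keeps $X_i' \subseteq E(i)$; removing keeps it an orientation too. Other agents' bundles don't change, so the orientation property is preserved globally. For P2: for each neighbor $j$, the set $E(i,j)$ is still allocated according to the $\sigma$-imposed cut — we need to check that after the change, still "at most one unit bundle goes to $i$, at most one to $j$, and if both are allocated they go to different agents." Condition (a) guarantees $i$ holds at most one unit bundle from $E(i,j)$. Agent $j$'s holdings in $E(i,j)$ are unchanged. Since before the change P2 held, $j$ held at most one, and it was a *different* unit bundle than the one $i$ held. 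After the change, if $i$ now holds a unit bundle of $E(i,j)$, I need it to be the one $j$ does *not* hold — but this is automatic: the only unit bundles $i$ can acquire are *previously unallocated* ones, and a unit bundle held by $j$ is not unallocated. So P2 is preserved.

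**Step 2: No new envy is created; in particular P4 is preserved and $i$ stays unenvied.**
Only $X_i$ changed. So the only possible new envy edges are (i) someone $\ell$ now envies $i$, or (ii) $i$ now envies someone $\ell$.

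For (i): suppose $\ell$ envies $X_i'$. Since $X_i', X_i \subseteq E(i)$ and the relevant valuations are graphical, $v_\ell(X_i') = v_\ell(X_i' \cap E(\ell)) = v_\ell(X_i' \cap E(i,\ell))$. The set $X_i' \cap E(i,\ell)$ is a union of at most one unit bundle of $E(i,\ell)$ (by condition (a)). Now if $\ell$ is the single neighbor such that $i$ held a unit bundle of $E(i,\ell)$ before (if any), compare: before, $X_i \cap E(i,\ell)$ was that same-or-different unit bundle. Here's the subtlety: the unit bundle of $E(i,\ell)$ that $i$ acquires could be *more valuable to $\ell$* than what $i$ held before — because $i$ picks by $v_i$-value, not $v_\ell$-value. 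So I cannot argue "$\ell$'s view of $X_i$ only went down." Instead I should use the hypothesis that $i$ is *unenvied* together with P3/Observation \ref{b remains ub}: the unit bundle $i$ acquires from $E(i,\ell)$ was unallocated in the old allocation, hence by P3, $v_\ell(X_\ell) \ge v_\ell(C)$ for that unit bundle $C$. Hmm — but $v_\ell(X_i') = v_\ell(C)$ only if $X_i' \cap E(\ell)$ is exactly $C$; in general $X_i' \cap E(\ell) = X_i' \cap E(i,\ell) \subseteq C$ (one unit bundle). So $v_\ell(X_i') = v_\ell(X_i' \cap E(i,\ell)) \le v_\ell(C) \le v_\ell(X_\ell)$ by monotonicity and P3. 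Wait — I need $X_i' \cap E(i,\ell)$ to lie inside a *single* unit bundle; condition (a) says $i$ gets at most one unit bundle from $j=\ell$, and (recall the standing convention from the "Unit Bundles" discussion) an agent holds either all or none of a unit bundle, so $X_i' \cap E(i,\ell)$ is either a full unit bundle or empty. Either way $v_\ell(X_i') \le v_\ell(\text{that unit bundle})$. And that unit bundle is either (1) unallocated in the old $\X$ — then P3 gives $v_\ell(X_\ell) \ge v_\ell(\text{it})$; or (2) already held by $i$ in the old $\X$ — then since $i$ was unenvied, $\ell$ did not envy $i$ before, i.e. $v_\ell(X_\ell) \ge v_\ell(X_i) \ge v_\ell(X_i \cap E(i,\ell)) = v_\ell(\text{it})$. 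In both cases $v_\ell(X_i') \le v_\ell(X_\ell)$, so $\ell$ does not envy $i$ — no new edge into $i$, and $i$ stays unenvied.

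For (ii): suppose $i$ now envies some $\ell$, i.e. $v_i(X_\ell) > v_i(X_i')$. Since $v_i(X_i') \ge v_i(X_i)$, this would give $v_i(X_\ell) > v_i(X_i)$, i.e. $i$ envied $\ell$ *already* in $\X$ — not a new edge. So no new out-edge from $i$ either.

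Hence no new envy edges; combined with the old P4, every envy path still has length $\le 1$ (the set of envied agents only shrinks, since $i$ was already unenvied and stays unenvied, and all other bundles are fixed), so P4 holds.

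**Step 3: $\efx$ is preserved (rest of P1).**
Strong envy can only be newly introduced via the changed bundle $X_i'$: either someone $\ell$ strongly envies $X_i'$, or $i$ strongly envies some $X_\ell$. The second case: for any $g \in X_\ell$, $v_i(X_\ell \setminus g) \le v_i(X_\ell)$; if $v_i(X_i') \ge v_i(X_\ell)$ we're fine, and if $v_i(X_\ell) > v_i(X_i') \ge v_i(X_i)$, then $i$ strongly envied $X_\ell$ already w.r.t. $X_i$ (using $v_i(X_i') \ge v_i(X_i)$ and that $X_\ell$ unchanged) — contradicting old P1. For the first case: $\ell$ strongly envies $X_i'$ means there's $g \in X_i'$ with $v_\ell(X_\ell) < v_\ell(X_i' \setminus g)$. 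But from Step 2(i), $v_\ell(X_i') \le v_\ell(X_\ell)$, and $v_\ell(X_i'\setminus g) \le v_\ell(X_i')$ by monotonicity — contradiction. So $\X$ remains $\efx$, P1 holds. (And P2, P3 — wait, P3 isn't in the list P1–P4; never mind. P2 done in Step 1; P3 is actually not claimed — re-reading, the lemma claims P1–P4, and P3 is separate. Good, P3 is not required here.)

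**Main obstacle.** The one genuinely non-obvious point is Step 2(i): why doesn't some third agent $\ell$ start envying $i$ after $i$ greedily grabs a more-$\ell$-valuable unit bundle of $E(i,\ell)$? The resolution is that the standing invariant "an agent holds all-or-nothing of a unit bundle" forces $X_i' \cap E(\ell)$ to be a full unit bundle (or empty), and every full unit bundle of $E(i,\ell)$ is, from $\ell$'s viewpoint, dominated by $X_\ell$ — either by P3 (if it was unallocated) or by $i$ being unenvied (if $i$ already held it). Everything else is routine monotonicity bookkeeping. I'll write the proof in the order Step 1 (orientation + P2), Step 2 (no new envy, P4, $i$ stays unenvied), Step 3 ($\efx$), and flag Step 2(i) as the crux.
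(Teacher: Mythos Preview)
Your approach is essentially the same as the paper's: only $i$'s bundle changes, the orientation and cut-structure (P2) are preserved by construction, $i$ remains unenvied because $X_i'\cap E(\ell)$ is either empty, a previously unallocated unit bundle (handled by P3), or a unit bundle $i$ already held (handled by $i$ being unenvied), and from there $\efx$ and P4 follow since no new envy is created and no other bundle moves. Your Step~2(i) case split is exactly the paper's argument.

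There is, however, a genuine gap. You write ``P3 is actually not claimed --- re-reading, the lemma claims P1--P4, and P3 is separate,'' but Property~\ref{p3} \emph{is} one of the four properties in the range P1--P4, and the lemma does assert it is preserved. You never check it. The issue is that when $i$ \emph{releases} a unit bundle $C\subseteq E(i,\ell)$, that bundle becomes newly unallocated, and you must verify $v_k(X_k')\ge v_k(C)$ for every agent $k$. The fix is short and uses the same ideas you already have: for $k\notin\{i,\ell\}$ the bundle $C$ is irrelevant; for $k=\ell$ use that $\ell$ did not envy $i$ before, so $v_\ell(X_\ell)\ge v_\ell(X_i)\ge v_\ell(C)$; for $k=i$ use $v_i(X_i')\ge v_i(X_i)\ge v_i(C)$ by monotonicity. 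The paper's proof includes exactly this check (``the new unallocated unit bundles are either previously unallocated or were allocated to agent $i$ who was nonenvied''), and without it your proof is incomplete.
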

\begin{proof}
    Denote the bundle of agent $i$ before the change by $X_i$, and denote her new bundle by $X'_i$.
    Note that the agent $i$ is the only agent whose bundle changes, and she only receives her incident unit bundles;
    therefore, the allocation remains an orientation.
    Next, we show that agent $i$ remains non-envied.
    Since allocation remains an orientation, we get that for every agent $j$,
    $X'_i\cap E(j) = X'_i \cap E(i,j)$, so $X'_i\cap E(j)$ is a unit bundle $C$.
    Either $C$ was unallocated or $C\subseteq X_i$.
    Agent $j$ does not envy $C = X'_i\cap E(j)$, in the first case,  by property \ref{p3}, and in the second case, 
    by the fact that $X_i$ was non-envied.

    Moreover, for every pair of agents $j$ and $p$ such that $j\ne i$, since bundle $X_j$ does not change, 
    and no ones value for her bundle decreases, if agent $p$ did not envy (respectively, strongly envy) agent $j$,
    agent $p$ still does not envy (strongly envy) agent $j$.
    Hence, since allocation was $\efx$, it remains $\efx$ and property \ref{p1} still holds.
    Furthermore, since no new envy is created, and the length of any envy path was at most one, this property still holds after the update; therefore, property \ref{p4} still holds.

    In addition, since agent $i$ is the only agent whose bundle changes, and she only gets incident unit bundles with a restriction that
    she does not receives two unit bundles from an incident agent, property \ref{p2} still holds.
    Additionally, since the new unallocated unit bundles are either were unallocated in the previous allocation or were allocated to agent $i$ who was nonenvied, property \ref{p3} still holds.
\end{proof}

\begin{lemma}
\label{lem ai}
    During the execution of Algorithm \ref{alg:prop567}, whenever the first if-statement is executed, the potential function \pot strictly decreases while properties \ref{p1}-\ref{p4} remain satisfied.
\end{lemma}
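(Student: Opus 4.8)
The plan is to analyze the first if-statement of \cref{alg:prop567}: a non-envied agent $i$ with $\bio \neq \emptyset$ receives $X_i \gets X_i \cup \bio$. I would first argue that this update is an instance of the operation covered by \cref{preserve 1-4}, so that properties \ref{p1}--\ref{p4} are preserved automatically and no new envy is created. To invoke that lemma I need three things: (a) agent $i$ is non-envied (given by the if-condition); (b) $i$ only receives incident unallocated unit bundles, which holds since by \cref{b remains ub} each $\bijo$ is either a unit bundle between $i$ and $j$ or empty, and each such bundle is unallocated because $i$ holds no unit bundle in $E(i,j)$ whenever $\bijo\neq\emptyset$; and (c) agent $i$ receives \emph{at most one} unit bundle from each agent $j$ — indeed she receives exactly $\bijo$ from $j$, a single unit bundle (or nothing). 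It also must be checked that $v_i(X_i)$ does not decrease, which is immediate from monotonicity since we only add goods. Hence \cref{preserve 1-4} applies and properties \ref{p1}--\ref{p4} remain satisfied.

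Next I would show the potential $\phi(\X)=\langle\phi_1,\phi_2,\phi_3\rangle$ strictly decreases lexicographically. Since \cref{preserve 1-4} guarantees no new envy edge is created, no previously non-envied agent becomes envied, so $\phi_1(\X)$ does not increase. Agent $i$ herself was non-envied and remains non-envied (shown above), so $\phi_1$ is unchanged. It therefore suffices to show that the pair $\langle\phi_2,\phi_3\rangle$ strictly decreases. The key point is that after the update $\bio=\emptyset$: every $\bijo$ that was nonempty consisted of an unallocated unit bundle in $E(i,j)$, and after agent $i$ absorbs it, agent $i$ now holds a unit bundle in $E(i,j)$, which by the definition of the $B$-notation forces $\bijo=\emptyset$ for the new allocation. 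So agent $i$ no longer violates property \ref{p5}, decreasing $\phi_3$ by one — \emph{provided} no other non-envied agent starts violating \ref{p5} or \ref{p6} and agent $i$ does not start violating \ref{p6}.

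The main obstacle — and the step I would spend the most care on — is controlling the effect of the update on the other two coordinates, i.e., ruling out collateral damage to properties \ref{p5} and \ref{p6} at other agents and to \ref{p6} at agent $i$. For another non-envied agent $p\neq i$: the only unit bundles that changed allocation status are those in $\bigcup_j E(i,j)$, all incident to $i$. If such a bundle lies in $E(i,p)$, then since it was part of $\bio$ it was unallocated and $p$ held no unit bundle there; afterwards $i$ holds it, so $p$ still holds none and the relevant $B^{(1)}_{p,\cdot}$ and $\ui[p]$-type quantities for $p$ either shrink or are unaffected — so $p$ cannot newly violate \ref{p5}, and $v_p(\cdot)$ of $p$'s available pool only decreases, so $p$ cannot newly violate \ref{p6}. (Here I would lean on monotonicity to say that shrinking the set of unallocated incident edges can only help property \ref{p6}.) For agent $i$ herself: one must check $v_i(X_i')\ge v_i(\ui')$, where $\ui'$ is the new set of unallocated edges incident to $i$; but by property \ref{p3} at the \emph{start} of this iteration we had $v_i(X_i)\ge v_i(C)$ for every unallocated unit bundle $C$, and the only unallocated incident unit bundles remaining after the update are exactly the $\bijt$'s (the ``second'' unit bundles), each still dominated by $v_i(X_i)\le v_i(X_i')$; combined with the orientation structure (distinct $\bijt$ lie in disjoint $E(i,j)$) and a counting/cancelable-or-monotone argument this gives $v_i(X_i')\ge v_i(\ui')$, so $i$ satisfies \ref{p6}. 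Putting these together, $\phi_2$ does not increase and $\phi_3$ strictly decreases, so $\phi(\X)$ strictly decreases in lexicographic order, completing the proof.
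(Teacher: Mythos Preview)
Your overall approach matches the paper's: invoke \cref{preserve 1-4} to preserve properties \ref{p1}--\ref{p4} and rule out new envy, then argue the lexicographic potential drops. That part is fine.

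There is, however, a genuine gap in your treatment of $\phi_2$ at agent~$i$. You write that one must check $v_i(X_i')\ge v_i(U_i(\X'))$, and you attempt to derive this from property~\ref{p3} (which only gives $v_i(X_i)\ge v_i(C)$ for each \emph{single} unallocated unit bundle $C$) together with ``a counting/cancelable-or-monotone argument''. That step does not go through for general monotone valuations: knowing $v_i(X_i')\ge v_i(\bijt)$ for each $j$ does \emph{not} imply $v_i(X_i')\ge v_i\bigl(\bigcup_j \bijt\bigr)$, even when the $\bijt$ are pairwise disjoint. Disjointness is irrelevant here without additivity, and cancelability does not supply such an aggregation either. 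More importantly, you are trying to prove something stronger than needed. To show $\phi_2$ does not increase you only need: if $i$ satisfied property~\ref{p6} \emph{before} the update, she still does afterwards. This is immediate from monotonicity alone, since $X_i'\supseteq X_i$ and $U_i(\X')\subseteq U_i(\X)$, so $v_i(X_i')\ge v_i(X_i)\ge v_i(U_i(\X))\ge v_i(U_i(\X'))$. This is exactly the paper's argument (stated uniformly for all agents $j$: the set $U_j(\X)$ only shrinks, and no bundle shrinks).

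Two smaller points. First, your claim that $\phi_1$ is ``unchanged'' is not quite right: no new envy is created, but agent~$i$'s value strictly increased and she may stop envying some agent, so $\phi_1$ can decrease. The paper handles this by a case split (if $\phi_1$ drops we are done; otherwise the set of non-envied agents is unchanged and one proceeds to $\phi_2,\phi_3$). Second, in your analysis of another non-envied agent $p$, the assertion ``$p$ held no unit bundle there'' is false: when $\bipo\neq\emptyset$ it only means $i$ held no unit bundle in $E(i,p)$, but $p$ may well hold one. This does not break the conclusion (you only need that $U_p(\X)$ shrinks), but the stated reason is incorrect.
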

\begin{proof}
    Given that we have entered the first if statement, $\X$ does not satisfy property \ref{p5}, and agent $i$ is a non-envied agent violating property \ref{p5}.

    By definition, $\bio =\cup_{j \ne i} \bijo$, and $\bijo$ is a unit bundle in $E(i,j)$, 
    and also if $X_i \cap E(i,j)\ne \emptyset$, then $\bijo = \emptyset$.
    Hence, by setting $X_i \gets X_i \cup \bio$, the conditions of \cref{preserve 1-4} satisfy. Therefore, properties \ref{p1}-\ref{p4}
    still hold.

    Next, we show that \pot\ decreases.
    Since, by \cref{preserve 1-4}, no new envy is created, the number of envied agents, \poto, does not increase.
    If \poto\ decreases, \pot\ decreases, so assume not. Therefore,  the set of non-envied agents is the same as before the execution of 
    line \ref{non-envied-shares}. Also, since $\bio$ is a union of unallocated unit bundles, which will be  allocated to agent $i$,
    for every agent $j$, the set $\uj$ would be a subset of itself after the execution of line \ref{non-envied-shares}.
    Consequently, every non-envied agent who satisfies Property \ref{p6} will continue to satisfy it after the execution of line \ref{non-envied-shares}.
    Therefore, \pott\ does not increase.
    Also, note that if an arbitrary agent $j$ did not violate property \ref{p5} before the execution, she still does not violate it, while $i$ no longer violates property \ref{p5}, so $\phi_3(\X)$ strictly decreases. Therefore, $\phi(\X)$ strictly decreases after the execution.
\end{proof}

\begin{observation}\label{obs:structure}
    If allocation $\X$ and $\sigma$ satisfy properties \ref{p1}-\ref{p5} then looking at the allocated edges we find
    \begin{itemize}
        \item Between two non-envied agents $i$ and $j$, both unit bundles must be allocated by property \ref{p5}.
        
        \item Between envied agent $i$ and agent $j$ who is the sole agent who envies $i$, both bundles must also be allocated. 
        This is because $i$ must possess some unit bundle that $j$ envies, and due to property \ref{p4}, $j$ must be non-envied, 
        so she has the remaining unit bundle by property \ref{p5}.
        
        \item Between envied agent $i$ and non-envied agent $j$ who does not envy $i$, there must be a single unallocated unit bundle. 
        Agent $i$ cannot possess any unit bundle from $E(i,j)$; otherwise, she would be strongly envied. Also, $j$ must have a single unit bundle from $E(i,j)$, by property \ref{p5}.
        
        \item Between two envied agents $i$ and $j$, both unit bundles must be
        unallocated.
    \end{itemize}
\end{observation}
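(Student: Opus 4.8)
The plan is to treat the four bullets one at a time, in each case combining \cref{one envy}, the case analysis in the definition of $\bijo,\bijt$, and properties \ref{p2}, \ref{p4} and \ref{p5}. The single fact I would isolate first and then reuse throughout is the following consequence of property \ref{p5}: \emph{for a non-envied agent $i$ and any neighbor $j$ (i.e.\ with $E(i,j)\neq\emptyset$), agent $i$ holds one of the two unit bundles of $E(i,j)$.} Indeed, $\bio=\emptyset$ forces $\bijo=\emptyset$, and scanning the three cases in the definition of $\bijo$, the value $\emptyset$ is attained, for a nonempty $E(i,j)$, only in the case where agent $i$ already possesses a unit bundle of $E(i,j)$.

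With this in hand, bullet~1 is immediate: if $i$ and $j$ are both non-envied, the isolated fact shows that $i$ holds one unit bundle of $E(i,j)$ and, symmetrically, that $j$ holds one unit bundle of $E(i,j)$; since bundles in an allocation are disjoint and there are exactly two unit bundles between $i$ and $j$ (the two parts of the fixed cut configuration, by property \ref{p2}), $i$ and $j$ hold the two distinct ones and both are allocated. For bullet~2, \cref{one envy} tells us that an envied agent $i$ has a unique envier $j$ and that $X_i\subseteq E(i,j)$; by property \ref{p2} this means $X_i$ is exactly one unit bundle $C$ of $E(i,j)$ and the other unit bundle $C'$ is either unallocated or held by $j$. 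Property \ref{p4} forces $j$ to be non-envied (an envier of $j$ together with the edge $j\to i$ would be an envy path of length two), so the isolated fact applies to $j$; comparing with the case analysis of $\bjio$ under the hypothesis ``$i$ holds a unit bundle of $E(i,j)$'', the only option consistent with $C'$ being unallocated is $\bjio=C'\neq\emptyset$, which contradicts $\bo{j}=\emptyset$. Hence $j$ holds $C'$, and both unit bundles of $E(i,j)$ are allocated.

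Bullets~3 and~4 follow the same template but now use the other half of \cref{one envy}: the bundle of an envied agent lies entirely inside the edge set toward its unique envier, hence is disjoint from $E(i,j)$ as soon as $j$ is not that envier. For bullet~3, the envied agent $i$ therefore holds nothing in $E(i,j)$, while the non-envied $j$ holds some unit bundle of $E(i,j)$ by the isolated fact; by property \ref{p2} the remaining unit bundle---which is not held by $i$---must be unallocated, so exactly one unit bundle of $E(i,j)$ is unallocated. For bullet~4, if $i$ and $j$ are both envied, then property \ref{p4} shows that neither envies anyone (an out-edge of $i$ composed with $i$'s in-edge in the envy graph would be a path of length two), so in particular neither envies the other, and consequently neither is the envier of the other; by \cref{one envy}, $X_i\subseteq E(i,p)$ and $X_j\subseteq E(j,q)$ for the enviers $p$ of $i$ and $q$ of $j$, where $p\neq j$ and $q\neq i$, so neither $i$ nor $j$ holds any edge of $E(i,j)$, and by property \ref{p2} the only remaining case is that $E(i,j)$ is entirely unallocated.

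I do not expect a genuine obstacle: this is essentially a bookkeeping corollary of properties \ref{p1}--\ref{p5} and \cref{one envy}. The points that need care are (a) lining up the three cases in the definition of $\bijo,\bijt$ with ``who currently owns what in $E(i,j)$'', (b) being precise about which agent's non-envied status is invoked for property \ref{p5}---the envier in bullet~2, and the ``other'' agent in bullets~3 and~4---and (c) noting that a degenerate $\efx$ cut with an empty part is harmless, since an empty unit bundle is vacuously ``allocated''.
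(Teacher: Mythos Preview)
Your proposal is correct and follows essentially the same approach as the paper: the justifications are embedded in the observation itself, and your argument expands exactly those hints (property~\ref{p5} forces non-envied agents to hold a unit bundle toward each neighbor; \cref{one envy} confines an envied agent's bundle to the edges toward her unique envier; property~\ref{p4} rules out envied agents having outgoing envy). Your choice to isolate the ``non-envied agent holds a unit bundle toward every neighbor'' fact once and reuse it is a clean organizational improvement over the paper's more telegraphic inline reasoning, but the underlying logic is identical.
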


\begin{lemma}
During the execution of Algorithm \ref{alg:prop567}, whenever the second if-statement is executed, the potential function \pot\ strictly decreases while properties \ref{p1}-\ref{p4} remain satisfied.
\end{lemma}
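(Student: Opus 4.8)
The plan is to obtain the preservation of properties \ref{p1}-\ref{p4} from \cref{preserve 1-4}, and then to track the lexicographic potential \pot\ by a direct bookkeeping argument. Throughout, note that when the second if-branch executes, agent $i$ is non-envied, $v_i(X_i)<v_i(\ui)$, and --- because the first if-branch did not fire --- property \ref{p5} already holds, so \cref{obs:structure} applies to the current allocation.

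First I would pin down exactly what the update does. Fix a neighbour $j$ of $i$. If $j\in T$ then $E(i,j)$ contains no unallocated edge, and \cref{obs:structure} forces exactly one unit bundle of $E(i,j)$ to lie in $X_i$ (the two structural possibilities being: $j$ non-envied, or $j$ envied with $i$ envying $j$). If $j\notin T$ then \cref{obs:structure} gives that $j$ is an envied agent that $i$ does not envy, that one unit bundle of $E(i,j)$ lies in $X_i$, and that the other is unallocated. Hence $\ui$ is the disjoint union, over $j\notin T$, of one unit bundle of $E(i,j)$, and every edge of $\ui$ is incident to an envied agent. Writing $R:=X_i\cap\bigcup_{j\notin T}E(i,j)$, the assignment $X_i\gets(X_i\cap\bigcup_{j\in T}E(i,j))\cup\ui=:X_i'$ removes $R$ from $i$ and hands $i$ the set $\ui$; concretely, for every $j\notin T$ it swaps the unit bundle of $E(i,j)$ that $i$ held for the other, previously-unallocated one, and leaves all of $i$'s remaining unit bundles in place. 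This is precisely the situation covered by \cref{preserve 1-4}: $i$ is non-envied, the pieces taken from and given to $i$ are incident unit bundles (the added ones previously unallocated), $i$ ends up holding at most one unit bundle from every agent, and since $X_i'\supseteq\ui$, monotonicity gives $v_i(X_i')\ge v_i(\ui)>v_i(X_i)$, so $v_i(X_i)$ does not decrease. Therefore \cref{preserve 1-4} guarantees that \ref{p1}-\ref{p4} still hold and that no new envy edge is created.

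Next I would show \pot\ strictly decreases. Since no new envy is created, \poto\ does not increase; if it strictly decreases we are done, so assume it stays the same, which means the partition of agents into envied and non-envied is unchanged. The only edges that change allocation status are those of $R$ (now unallocated) and those of $\ui$ (now held by $i$), and both sets are incident only to $i$ and to envied agents. Consequently, for every non-envied agent $j\neq i$, neither $X_j$ nor $\uj$ changes, so the property-\ref{p6} status of $j$ is unaffected. For agent $i$ itself, the new set of unallocated incident edges is exactly $R\subseteq X_i$, so $v_i(X_i')\ge v_i(\ui)>v_i(X_i)\ge v_i(R)$, meaning $i$ now satisfies property \ref{p6}. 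Hence exactly one non-envied agent changes its property-\ref{p6} status, and it does so from violating to satisfying, so \pott\ strictly decreases; together with \poto\ unchanged, \pot\ strictly decreases in lexicographic order (whatever happens to \potth).

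I expect the main obstacle to be the structural step above: squeezing out of \cref{obs:structure} and property \ref{p5} that $\ui$ is supported only on edges to envied neighbours and that the swap leaves $i$ with at most one unit bundle from each agent. This is exactly what confines every allocation-status change to edges incident to $i$ or to envied agents, and hence keeps the property-\ref{p6} status of all other non-envied agents frozen; once that is in place, both the application of \cref{preserve 1-4} and the potential count are routine. The only minor nuisance is degenerate unit bundles (an empty bundle when $|E(i,j)|=1$), which are absorbed harmlessly by the ``at most one unit bundle per agent'' phrasing.
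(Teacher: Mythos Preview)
Your proposal is correct and follows essentially the same approach as the paper: use \cref{obs:structure} to pin down that $\ui$ lives only on edges to envied neighbours and that the update is a swap of unit bundles, invoke \cref{preserve 1-4} for properties \ref{p1}--\ref{p4} and ``no new envy'', then argue that if \poto\ stays fixed, $\uj$ is unchanged for every non-envied $j\neq i$ while $i$ now satisfies property \ref{p6}, forcing \pott\ to drop. Your write-up is more explicit about the structural case analysis and the identity of the new unallocated set $R$, but the strategy and the key observations coincide with the paper's proof.
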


\begin{proof}
    Given that we have entered the second if statement, $\X$ must satisfy properties \ref{p1}-\ref{p5} but does not satisfy property \ref{p6}, and $i$ is a non-envied agent violating the mentioned property.

    By Observation \ref{obs:structure}, $\ui$ contains only items between agent $i$ and envied agents that are not envied by $i$, and for every such envied agent $j$, 
    a unit bundle of $E(i, j)$ has been allocated to agent $i$, and the other unit bundle is unallocated. Inside the second if statement, we basically swap these two unit bundles, i.e., allocate the previously unallocated one to $i$ and release the other unit bundle. 
    Hence, the conditions of \cref{preserve 1-4} satisfy. Therefore, properties \ref{p1}-\ref{p4}
    still hold.


    Next, we show that \pot\ decreases.
    Since, by \cref{preserve 1-4}, no new envy is created, the number of envied agents, \poto, does not increase.
    If \poto\ decreases, \pot\ decreases, so assume not. Therefore,  the set of non-envied agents is the same as before. 
    For any non-envied agent $j\neq i$, by Observation \ref{obs:structure}, it is clear that $\uj$ does not change, 
    meaning that if she used to satisfy property \ref{p6}, she still does. 
    Also, agent $i$ now satisfies property \ref{p6} as the new $\ui$ is a subset of what she held before. 
    This means that $\phi_2(\X)$ must strictly decrease, resulting in the decrease of $\phi(X)$.
\end{proof}

\begin{lemma}
During the execution of Algorithm \ref{alg:prop567}, whenever the third if-statement is executed, the potential function \pot\  strictly decreases while properties \ref{p1}-\ref{p4} remain satisfied.
\end{lemma}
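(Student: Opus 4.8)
The plan is to first pin down the structural picture that holds when the third if-statement fires, then verify that properties~\ref{p1}--\ref{p4} survive the update, and finally observe that the number of envied agents $\phi_1(\X)$ strictly drops. When we reach the third branch, properties~\ref{p1}--\ref{p5} all hold (otherwise an earlier branch would fire), so by \cref{one envy} and \cref{obs:structure} agent $i$ is envied by exactly one agent $j$, this $j$ is non-envied, $i$ envies nobody (property~\ref{p4}), $X_i$ is a single unit bundle of $E(i,j)$, and $X_j$ contains the other unit bundle of $E(i,j)$; write $C := X_i$ and let $C'$ be $j$'s unit bundle in $E(i,j)$. I will also carry an auxiliary invariant, proved by induction on the iterations of \cref{alg:prop567}: in the current allocation every envy edge points from a $\sigma$-later agent to a $\sigma$-earlier agent. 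This holds at the end of Phase One; the first two branches create no new envy by \cref{preserve 1-4}, and the third branch creates none either, as shown below. Hence $i\prec_\sigma j$, so the fixed configuration on $E(i,j)$ is the $j$-cut and $(C,C')$ is an $\efx$ cut for $j$. The update replaces $X_i$ by $X_i'' := C'\cup\biu$ and $X_j$ by $X_j'' := (X_j\setminus C')\cup C$ and leaves every other bundle unchanged; since both unit bundles of $E(i,j)$ are allocated, $B^{(u)}_{i,j}(\X)=\emptyset$, so $\biu$ is disjoint from $E(i,j)$ and is unaffected by the swap.

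Next I would show that both affected agents strictly improve and that $j$ stops envying $i$. Using that the multi-graph edges of $E(i,j)$ are incident to $i$ and $j$ only, $X_i''\cap E(j)=C'$, $X_j''\cap E(i)=C$, and $\biu\cap E(i)=\biu$; hence $v_i(X_j\cup\biu)=v_i(C'\cup\biu)=v_i(X_i'')$, so the entry condition $v_i(X_i)<v_i(X_j\cup\biu)$ is exactly $v_i(X_i'')>v_i(C)=v_i(X_i)$. Since $j$ envied $i$, $v_j(C)=v_j(X_i)>v_j(X_j)\ge v_j(C')$, and because $C\subseteq X_j''$ we get $v_j(X_j'')\ge v_j(C)>v_j(X_j)$. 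Finally $v_j(X_i'')=v_j(C')$, and as $(C,C')$ is an $\efx$ cut for $j$ we have $v_j(C)\ge v_j(C'\setminus g)$ for all $g\in C'$, so $j$ does not strongly envy $i$ after the update.

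Then I would check properties~\ref{p1}--\ref{p4}. For property~\ref{p2}: the swap keeps both unit bundles of $E(i,j)$ allocated, one to each endpoint, while $\biu$ adds at most one previously-unallocated unit bundle per agent $p\ne i$, disjoint from $E(i,j)$, so $i$ never holds two unit bundles of a common pair. For property~\ref{p3}: nothing becomes newly unallocated (the swap merely moves $C,C'$ between $i$ and $j$, and $\biu$ becomes allocated) and $v_i(X_i'')$, $v_j(X_j'')$ only increased, so every ``own bundle versus unallocated unit bundle'' inequality still holds. For properties~\ref{p1} and \ref{p4} the crux is that no new envy edge is created. Only $X_i$ and $X_j$ changed, so I only examine envy directed at $i$ or at $j$. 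For $p\notin\{i,j\}$ we have $X_j''\cap E(p)=X_j\cap E(p)$ (edges of $E(i,j)$ miss $E(p)$), so $p$'s view of $j$ is unchanged and $p$ did not envy $j$ before (property~\ref{p4} forbids $p\to j\to i$); and $X_i''\cap E(p)=B^{(u)}_{i,p}(\X)$, which is either empty or a unit bundle that was unallocated just before this step, so by property~\ref{p3} (valid before the step) $v_p(X_p)\ge v_p(B^{(u)}_{i,p}(\X))$, whence $p$ does not strongly envy $i$. Agent $j$ does not envy $i$ by the previous paragraph, and agent $i$ envies no one because $v_i(X_i'')>v_i(C)\ge v_i(X_p'')$ for every $p$ (for $p=j$ note $v_i(X_j'')=v_i(C)$; for $p\notin\{i,j\}$, $X_p$ is unchanged and $i$ did not envy it before). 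Hence $\X$ remains an $\efx$ orientation, the envy graph only lost edges, every envy path still has length at most one, and the auxiliary invariant is preserved. In particular, the unique envy edge into $i$ disappears and no agent becomes newly envied, so $\phi_1(\X)$ strictly decreases and $\phi(\X)$ decreases in lexicographic order regardless of $\phi_2,\phi_3$.

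The step I expect to be the main obstacle is property~\ref{p4}: forbidding a length-two envy path reduces to showing no new envy edge appears, which requires two structural facts to be used carefully --- that an edge of $E(i,j)$ is incident to no third agent, so outsiders' opinions of $j$ do not move, and that every unit bundle fed to $i$ out of $\biu$ was unallocated and hence already unenvied via property~\ref{p3} --- together with the auxiliary ``every envied agent precedes its envier in $\sigma$'' invariant, which is exactly what licenses invoking the $j$-cut's $\efx$ guarantee on $(C,C')$.
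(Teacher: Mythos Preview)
Your argument is correct and tracks the paper's proof closely: after the swap and the addition of $\biu$, agent $i$ becomes non-envied, agent $j$ stays non-envied, no new envy edge appears anywhere, so properties \ref{p1}--\ref{p4} persist and $\phi_1(\X)$ strictly drops. The one substantive difference is the auxiliary invariant (``every envy edge points from a $\sigma$-later agent to a $\sigma$-earlier agent''), which you carry solely to certify that $(C,C')$ is the $j$-cut and hence an $\efx$ cut for $j$, so that $j$ does not strongly envy $i$. This is correct but unnecessary: your own chain $v_j(X_j'')\ge v_j(C)>v_j(X_j)\ge v_j(C')=v_j(X_i'')$ already shows that $j$ does not envy $i$ at all, so the $\efx$-cut property of $(C,C')$ is never needed, and the paper's proof simply uses that direct inequality without any auxiliary invariant. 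Relatedly, in two places (for $j$, and for $p\notin\{i,j\}$ looking at $i$) you conclude only ``does not strongly envy $i$'' even though the inequalities you wrote give the stronger ``does not envy $i$''; it is the stronger statement you actually need to justify ``the envy graph only lost edges'' and the decrease in $\phi_1$, so you should state what you proved.
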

\begin{proof}
    Given that we have entered the third if statement, $\X$ must satisfy properties \ref{p1}-\ref{p6}, but does not satisfy property \ref{p7}, and $i$ is an envied agent violating the mentioned property.

    First, we show that agent $i$ becomes non-envied.
    Let $C\subseteq E(i,j)$ be the unit bundle that agent $i$ receives after the swap.
    Since agent $j$ envied the bundle held by agent $i$, after swapping the unit bundles of $E(i,j)$, agent $j$ no longer envies agent $i$. 
    Also, by \cref{obs:structure}, $\biu\cap E(j)=\emptyset$, meaning that agent $j$ does not envy agent $i$ after the execution of line \ref{add after swap}.
    Additionally, for every agent $p\notin \{i,j\}$, $\biu \cap E(p)$ is an unallocated unit bundle, so by property \ref{p3}, 
    $v_p(X_p)\geq  v_p(\biu \cap E(p))= v_p( (C\cup \biu) \cap E(p))= v_p(C \cup \biu)$. 
    Therefore, agent $i$ becomes non-envied.

    In addition, agent $j$ remains non-envied because the only part of $X_j$ that is changed is the unit bundle in $E(i,j)$, and agent $i$ does not envy this unit bundle by the fact that before the swap we had:
    $$v_i(X_i) < v_i(X_j \cup \biu) = v_i((X_j \cap E(i,j)) \cup \biu).$$

    Moreover, for every pair of agents $q\notin \{i,j\}$ and $p$, since bundle $X_q$ does not change, 
    and no ones value for her bundle decreases, if agent $p$ did not envy (respectively, strongly envy) agent $q$,
    agent $p$ still does not envy (strongly envy) agent $q$.
    Hence, since allocation was $\efx$, it remains $\efx$ and property \ref{p1} still holds.
    Furthermore, since no new envy is created, and the length of any envy path was at most one, this property still holds after the update; therefore, property \ref{p4} still holds.

    In addition, since agents $i$ and $j$ are the only agents whose bundles change, and they only receive incident unit bundles with a restriction that they do not receive two unit bundles from an incident agent, property \ref{p2} still holds.
    Additionally, since the new unallocated unit bundles are a subset of previous unallocated unit bundles, property \ref{p3} still holds.

    Since every non-envied agent remained non-envied, and envied agent $i$ became non-envied,
    $\phi_1(\X)$, and as a result $\phi(\X)$ strictly decreases.
\end{proof}

Given that $\phi_1(\X)$, $\phi_2(\X)$ and $\phi_3(\X)$ are each at most $n$, Algorithm \ref{alg:prop567} satisfies properties \ref{p1}-\ref{p7} in at most $n^3$ iterations.

\subsection{Phase Three}
    Let the allocation $\X$ and sequence $\sigma$ be the outputs of the first two phases. After satisfying all our desired properties, 
    we now allocate the remaining unallocated unit bundles to a third-party to obtain a full $\efx$ allocation. Note that this is the only phase where we use the assumption that our input instance is triangle-free.
    



    Algorithm \ref{alg:dumping} takes $(\X,\sigma)$ that satisfy properties \ref{p1}-\ref{p7} as input, and then for every agent $j$ who envies some agent $i$, 
    adds $\bio$ to agent $j$. We prove that this algorithm outputs a complete $\efx$ allocation in \cref{full efx}.


\begin{algorithm}
\caption{Dumping Remaining Items}\label{alg:dumping}
\KwIn{Sequence $\sigma$ and allocation $\X$ satisfying properties \ref{p1}-\ref{p7}}
\KwOut{An $\efx$ allocation}





\For{every envied agent $i$}{
    Let $j$ be the agent who envies $i$. \hfill \tcp{There is only one by \cref{one envy}.}

    $X_j \gets X_j \cup \bio$

}
\Return $\X$

\end{algorithm}

\begin{lemma}
\label{full efx}
    Algorithm \ref{alg:dumping} terminates in at most $n$ rounds, and outputs an $\efx$ allocation.
\end{lemma}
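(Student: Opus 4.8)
The plan is to first record exactly which bundles \cref{alg:dumping} modifies and then verify the $\efx$ property by a short case analysis in which the triangle-free hypothesis is used at one identifiable step. Throughout, all quantities — $\bio$, ``envied'', and the envier of an agent — refer to the input allocation $\X$, which by property \ref{p1} is an $\efx$ orientation; by \cref{one envy} every envied agent has a unique envier, and by property \ref{p4} that envier is itself non-envied, so the loop is well defined, processes each envied agent once, and terminates in at most $n$ rounds. The only bundles that change are those of the \emph{enviers} (the non-envied agents that envy someone): each such $X_j$ is enlarged by the union of the sets $\bio$ over the agents $i$ that $j$ envies; write $X_j^{\mathrm{fin}}$ for the resulting bundle. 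For completeness of the output I will invoke \cref{obs:structure}: the only unallocated unit bundles in $\X$ are (a) the single free bundle between an envied agent $i$ and a non-envied agent $p$ that does not envy $i$, which equals $\boxy{i}{p}\subseteq\bio$ and is thus handed to $i$'s envier when $i$ is processed, and (b) both unit bundles between two envied agents $i,i'$, namely $\boxy{i}{i'}\subseteq\bio$ and $\boxy{i'}{i}=\btxy{i}{i'}\subseteq\bo{i'}$, which are distinct and are placed when $i$ and $i'$ are processed; since these $\bio$'s are pairwise disjoint and disjoint from every $X_j$, the output is a genuine complete allocation.

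For $\efx$, I fix an ordered pair $(a,b)$ and ask whether $a$ strongly envies $b$ in the output. If $X_b$ did not change — i.e.\ $b$ is envied, or is non-envied with no outgoing envy — then no agent strongly envies $b$: nobody's value for $X_b$ moved, every agent's value for her own (weakly larger) bundle did not decrease by monotonicity, and in $\X$ either nobody envied $b$ or only $b$'s unique envier did and not strongly. Hence the whole content reduces to the claim: for every envier $j$, no agent $a$ even weakly envies $X_j^{\mathrm{fin}}$. Writing $S_j$ for the (all envied) set of agents $j$ envies, the single place triangle-freeness is used is to note that $S_j$ is an independent set of the skeleton, since any two distinct members are both adjacent to $j$, so an edge between them would close a triangle. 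This makes $X_j^{\mathrm{fin}}\cap E(a)$ collapse. If $a=i_0\in S_j$, then $\bio\cap E(i_0)=\emptyset$ for every other $i\in S_j$, so $X_j^{\mathrm{fin}}\cap E(i_0)=(X_j\cap E(i_0))\cup\bo{i_0}\subseteq X_j\cup\bo{i_0}$, and property \ref{p7} yields $v_{i_0}(X_{i_0})\ge v_{i_0}(X_j\cup\bo{i_0})\ge v_{i_0}(X_j^{\mathrm{fin}})$. If $a\notin S_j\cup\{j\}$, I split on whether $X_j\cap E(a)$ is empty: if not, then $a\sim j$, hence $a$ is adjacent to no member of $S_j$ (another triangle), so $X_j^{\mathrm{fin}}\cap E(a)=X_j\cap E(a)$ and $a$ does not envy it, because $\X$ is $\efx$ and $a$ cannot envy $j$ in $\X$ (envy paths have length at most one while $j$ has an outgoing envy edge); if it is empty, then $X_j^{\mathrm{fin}}\cap E(a)=\bigcup_{i\in S_j,\ a\sim i}\boxy{i}{a}$ is a union of unallocated unit bundles incident to $a$, dominated by $v_a(X_a)$ through property \ref{p6} when $a$ is non-envied, and through the second inequality of property \ref{p7} when $a$ is envied.

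I expect the last case to be the genuine obstacle. When $a$ is envied and adjacent to several members of $S_j$ — exactly the configuration of \cref{fig: 2} — the bundle $X_j$ simultaneously accumulates several unit bundles that $a$ cares about, whereas property \ref{p7} only controls $X_j\cup\bio$ for a single $i$. The resolution is to observe that all these bundles together lie inside the single set $\bt{a}$: indeed \cref{b remains ub} together with \cref{obs:structure} gives $\boxy{i}{a}=\btxy{a}{i}\subseteq\bt{a}$ for each such $i$, since both $a$ and $i$ are envied so both unit bundles of $E(a,i)$ are unallocated — and $\bt{a}$ is precisely what the $B^{(2)}$-inequality of property \ref{p7} was designed to bound, giving $v_a(X_a)\ge v_a(\bt{a})\ge v_a(X_j^{\mathrm{fin}})$. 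Triangle-freeness is used once more here, to rule out $a$ simultaneously owning a unit bundle toward $j$ and edges into $S_j$. A minor bookkeeping point is that $X_j$ need not be a single unit bundle but a union of one unit bundle per neighbor of $j$; this is harmless because property \ref{p7} is stated with $j$'s entire bundle. Assembling the cases shows that no agent strongly envies any bundle in the output, so \cref{alg:dumping} returns a complete $\efx$ allocation.
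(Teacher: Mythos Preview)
Your proof is correct and takes essentially the same approach as the paper: argue completeness via \cref{obs:structure}, then for each envier $j$ show that no agent even weakly envies $X_j^{\mathrm{fin}}$ by the same case split --- $a\in S_j$ handled directly by property~\ref{p7}, and $a\notin S_j\cup\{j\}$ handled by the triangle-free dichotomy (either $a\sim j$ or $a$ has edges into $S_j$, but not both) together with property~\ref{p6} when $a$ is non-envied and the $\bt{a}$-inequality of property~\ref{p7} when $a$ is envied. Your identification of the inclusion $\bigcup_{i\in S_j}\boxy{i}{a}\subseteq\bt{a}$ via $\boxy{i}{a}=\btxy{a}{i}$ when both $a$ and $i$ are envied is exactly the paper's key step.
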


\begin{proof}
    Algorithm \ref{alg:dumping} terminates at most n iterations since there are fewer than $n$ envied agents. 

    Prior to the execution of Algorithm \ref{alg:dumping}, \cref{obs:structure} implies that every unallocated unit bundle resides in one of the following cases:

\begin{itemize}
    \item \textbf{Case 1:} Between an envied agent $i$ and a non-envied agent $j$ where agent $j$ does not envy $i$. Here, agent $j$ has received a unit bundle from $E(i, j)$ but agent $i$ has not received a unit bundle from $E(i,j)$.

    \item \textbf{Case 2: }Between two envied agents $i$ and $k$. Let $j$ and $l$ be the agents envying $i$ and $k$, respectively. Note that since our input instance does not contain a cycle of length three, we have $j \neq l$.
\end{itemize}

    We first argue that after execution of Algorithm \ref{alg:dumping}, all of the items will be allocated, which is to show that all the unallocated unit bundles will be allocated.
    In case 1, $\bio$ contains the unallocated unit bundle in $E(i,j)$, and it will be dumped to the agent who envies agent $i$.
    In case 2, since both of the unit bundles of $E(i,k)$ are unallocated, by definition, we get that $\biko = E(i,k) \setminus \bkio$. 
    Therefore, since $\biko \subseteq \bio$, and $\bkio \subseteq \bko$, unit bundles of $E(i,k)$ will be allocated to $j$ and $l$.
    
    Next, we show that allocation remains $\efx$.
    Suppose agent $i$ is an arbitrary non-envied, and she envies agents $j_1,\ldots,j_z$ prior to execution of algorithm \ref{alg:dumping}.
    Since, allocation was $\efx$, and no agent loses any item, we only need to show that no agent envies agent $i$ after that 
    $X'_i  = X_i \cup \bo{j_1} \cup \ldots \cup \bo{j_z}$ is allocated to $i$.

    For any agent $j_p \in \{j_1,\ldots, j_z\}$, we have that $\bigcup_{q\in [z]\setminus \{p\}} \bo{j_q} \cap E(j_p) = \emptyset$
    because the underlying multi-graph is triangle-free, and $j_p$ and $j_q$ are both adjacent to $i$, so $j_p$ and $j_q$ are not adjacent.
    Hence, $$v_{j_p}(X_{j_p}) \geq v_{j_p}(X_i \cup \bo{j_p} )) = v_{j_p}(X'_i),$$
    where the first inequality is due to property \ref{p7}. 
    


    
    Now, it remain to show the same for agents $r \notin \{i, j_1, \ldots, j_z\}$. Note that since the underlying multi-graph is triangle-free, 
    $r$ is either not adjacent to $i$, meaning $X_i \cap E(r) =  \emptyset$ or $r$ is not adjacent to every $j_p\in\{ j_1, \ldots, j_z\}$, 
    meaning $\bigcup_{q\in [z]} \bo{j_q} \cap E(r)=\emptyset$. Thus, we only need to show that
    $v_r(X_r) \geq v_r(X_i)$ and $v_r(X_r) \geq v_r(\bigcup_{q\in [z]} \bo{j_q})$.
    The first inequality holds since agent $i$ was non-envied.
    For the second inequality, note that
    $$\bigcup_{q\in [z]} \bo{j_q} \cap E(r) = \bigcup_{q\in [z]} \boxy{j_q}{r}.$$
    
    If $r$ is non-envied, since every unit bundle $\boxy{j_q}{r}$ is in $U_r(\X)$ then by monotonicity and property \ref{p6}:
    $$ v_r(X_r) \geq v_r(U_r(\X) \geq v_r(\bigcup_{q\in [z]} \boxy{j_q}{r})= v_r(\bigcup_{q\in [z]} \bo{j_q}).$$
    
    If $r$ is envied, since every $j_q$ is also envied, both unit bundles between $r$ and $j_q$ are unallocated, so $\boxy{j_q}{r} = \btxy{r}{j_q}$, and therefore, 
    $$\bigcup_{q\in [z]} \bo{j_q} \cap E(r) = \bigcup_{q\in [z]} \boxy{j_q}{r} = \bigcup_{q\in [z]} \btxy{r}{j_q} \subseteq \bt{r},$$
    
    Thus, recalling the monotonicity of the valuation functions and by property \ref{p7}, we find
    $$v_r(X_r) \geq v_r(\bt{r}) \geq v_r(\bigcup_{q\in [z]} \bo{j_q} \cap E(r))= v_r(\bigcup_{q\in [z]} \bo{j_q}).$$
    Either way, the desired inequality holds. Therefore, the proof is complete.
\end{proof}


Now, we have proven our main result, i.e., we can obtain an $\efx$ allocation for triangle-free multi-graphs if we do the following: first, execute \Cref{alg:sequence} for at most $n$ times to satisfy the first four properties. Then, execute Algorithms \ref{alg:prop567}, and \ref{alg:dumping} in the mentioned order.

\subsection{Running Time}

    In this part, we analyze the running time of our algorithm. By Lemma \ref{lem:sequence}, executing Algorithm \ref{alg:sequence} at most $n$ times satisfies properties \ref{p1}-\ref{p4}, where each execution takes polynomial time. It is also clear that Algorithms \ref{alg:prop567} and \ref{alg:dumping} terminate in polynomial time, as discussed in previous sections. Furthermore, $A_i(\X, \sigma)$ can also be computed in polynomial time. Therefore, the only part of our algorithm that determines the running time of it is the computation of the \emph{Cut Configurations} between any pair of adjacent vertices. Therefore, we briefly discuss the running time of computing such configurations for different classes of valuation functions.

    \paragraph{$\efx$ Cut Between Two Agents.} 
    It is well known that the simple algorithm introduced by \cite{PR20}, later referred to as the PR algorithm in \cite{ACGMM23}, 
    takes as input a monotone valuation function $v$, a set of goods $S$, and a natural number $k$, and outputs a partition
    $(Y_1,\ldots, Y_k)$ of $S$ such that every bundle in the partition is $\efx$-feasible for an agent with valuation $v$.
    The PR algorithm runs in pseudo-polynomial time for monotone valuations (see the proof of Theorem 3.49 in \cite{AGS25}). 
    Furthermore, \cite{AGS25} proposed a modified version of the PR algorithm and showed that, for $k=2$ and a cancelable valuation function, the modified algorithm runs in polynomial time (see Lemma 4.7 in \cite{AGS25}).

    Combining the results above from \cite{AGS25}, we can conclude that our result terminates in polynomial time for cancelable valuation functions, which are a strict generalization of additive valuation, and in pseudo-polynomial time for general monotone valuations.

\section{Conclusion}

In this work, we have studied a model that captures a setting where each item is relevant to at most two agents, while any pair of agents can have an arbitrary number of items that they both value, represented via a multi-graph. We advanced the understanding of fair allocation by resolving a key open question on the existence of $\efx$ allocations in multi-graph valuation settings. By proving existence whenever the underlying graph contains no three-cycles, our work strictly generalizes prior results and significantly broadens the class of instances for which $\efx$ fairness can be guaranteed. Beyond existence, we contribute algorithmic insights by establishing a pseudo-polynomial procedure for computing $\efx$ allocations under monotone valuations, which becomes polynomial when valuations are cancelable. These results mark one of the few known cases in which $\efx$ allocations exist for an arbitrary number of agents, thereby moving the boundary of tractable and guaranteed fairness further than previously known. Future directions include tightening the structural assumptions on the graph, that is, allowing cycles of length three to be presented in the graph structure, and proving the existence of $\efx$ allocations. We conjecture that any fair division instance represented via a multi-graph admits an $\efx$ allocation. Moreover, computing such an allocation might be possible by adding certain useful properties to our initial orientation (the output of our first phase). Another interesting direction is to explore trade-offs between fairness and efficiency, for example, by approximating Social Welfare or Nash Social Welfare.

Finally, we believe that the algorithmic insights gained by our work on triangle-free multi-graphs will contribute to future research on the existence and computation of $\efx$ allocations in more general settings.

\bibliographystyle{plainnat}
\bibliography{mybibliography}

\end{document}